\pgfplotsset{compat=1.17}
  \providecommand\BibTeX{{%
    \normalfont B\kern-0.5em{\scshape i\kern-0.25em b}\kern-0.8em\TeX}}}
\newcommand{\sys}[0]{\mathbb{S}}
\newcommand{\introparagraph}[1]{\vspace{0.7mm} \noindent \textbf{\em #1.}}
\newif\ifcomments
    \providecommand{\shadaj}[1]{{\protect\color{brown}{\bf [shadaj: #1]}}}
    \providecommand{\conor}[1]{{\protect\color{red}{\bf [conor: #1]}}}
    \providecommand{\alvin}[1]{{\protect\color{purple}{\bf [alvin: #1]}}}
    \providecommand{\paris}[1]{{\protect\color{blue}{\bf [paris: #1]}}}
    \providecommand{\joe}[1]{{\protect\color{teal}{\bf [joe: #1]}}}
    \providecommand{\jmh}[1]{{\protect\color{teal}{\bf [joe: #1]}}}
    \providecommand{\david}[1]{{\protect\color{green}{\bf [david: #1]}}}
    \providecommand{\chris}[1]{{\protect\color{violet}{\bf [chris: #1]}}}
    \providecommand{\davidmwei}[1]{{\protect\color{pink}{\bf [david wei: #1]}}}
    \providecommand{\val}[1]{{\protect\color{orange}{\bf [kaushik: #1]}}}
    \providecommand{\justin}[1]{{\protect\color{green}{\bf [justin: #1]}}}
    \providecommand{\mingwei}[1]{{\protect\color{rhodamine}{\bf [mingwei: #1]}}}
    \providecommand{\rithvik}[1]{{\protect\color{red}{\bf [rithvik: #1]}}}
    \providecommand{\nc}[1]{{\protect\color{pink}{\bf [nc: #1]}}}
     \providecommand{\accheng}[1]{{\protect\color{olive}{\bf [accheng: #1]}}}
    \providecommand{\dan}[1]{{\protect\color{purple}{\bf [dan: #1]}}}
    \providecommand{\shadaj}[1]{}
    \providecommand{\conor}[1]{}
    \providecommand{\alvin}[1]{}
    \providecommand{\paris}[1]{}
    \providecommand{\joe}[1]{}
    \providecommand{\jmh}[1]{}
    \providecommand{\david}[1]{}
    \providecommand{\chris}[1]{}
    \providecommand{\davidmwei}[1]{}
    \providecommand{\val}[1]{}
    \providecommand{\justin}[1]{}
    \providecommand{\mingwei}[1]{}
    \providecommand{\rithvik}[1]{}
    \providecommand{\nc}[1]{}
    \providecommand{\accheng}[1]{}
    \providecommand{\dan}[1]{}
\title{The Free Termination Property of Queries Over Time}
\author{Conor Power}{University of California Berkeley, United States}{conorpower@cs.berkeley.edu}{https://orcid.org/0000-0002-0660-5110}{}
\author{Paraschos Koutris}{University of Wisconsin Madison, United States}{paris@cswisc.edu}{https://orcid.org/0000-0001-6309-1702}{}
\author{Joseph M. Hellerstein}{University of California Berkeley, United States}{hellerstein@cs.berkeley.edu}{https://orcid.org/0000-0002-7712-4306}{}
\authorrunning{C. Power, P. Koutris, and J. Hellerstein}
\keywords{distributed systems, algebraic data models, coordination-free systems}
\begin{document}
\nolinenumbers
\newpage
\setcounter{page}{1} 

\maketitle

\begin{abstract}
Building on prior work on distributed databases and the CALM Theorem, we define and study the question of \emph{free termination}: in the absence of distributed coordination, what query properties allow nodes in a distributed (database) system to unilaterally terminate execution even though they may receive additional data or messages in the future? This completeness question is complementary to the soundness questions studied in the CALM literature. We also develop a new model based on semiautomata that allows us to bridge from the relational transducer model of the CALM papers to algebraic models that are popular among software engineers (e.g. CRDTs) and of increasing interest to database theory for datalog extensions and incremental view maintenance.
\end{abstract}

\section{Introduction}
Distributed data systems have become a central topic in computing over the last decade due to a confluence of factors including cloud computing, increasing data growth and massively popular global-scale services. A central technical challenge in distributed computing and databases is the use---and avoidance---of coordination mechanisms~\cite{brewer2012cap,calmcacm}. Coordination in distributed systems is both slow and susceptible to unavailability via the CAP Theorem. 
In response, theoreticians have studied the question of what programs are computable in a distributed fashion without the use of coordination,
most notably in the CALM Theorem~\cite{calmTheorem,weakerMonotonicity, zinn-win-move, tim2023} for coordination-free queries.
Meanwhile, language designers and systems researchers have begun building systems that encourage coordination-free programming, exemplified by Conflict-free Replicated Data Types (CRDTs)~\cite{crdtOverview, crdtsSurvey}, which are popular data structure libraries based on semi-lattices.

In this work we push beyond CALM in two directions. The first is to consider a different proof goal. The work on CALM is fundamentally about a soundness property called \emph{coordination-free eventual consistency}: in the absence of coordination, what properties of a program ensure that each node in a distributed system will emit only correct program outputs over time? Here we pose a complementary completeness question, \emph{free termination}: in the absence of coordination, what properties of a program ensure that each node in a distributed system can unilaterally terminate after producing all its (correct) results, even if updates may arrive from other nodes in the future? From a practical perspective, free termination is critical to any user or client that requires a complete answer before proceeding.

The second ambition of this paper is to generalize the theory in this domain from its roots in relational transducers as explored in the CALM papers, and extend it to the context of the algebraic frameworks that are native to CRDTs, and of increasing interest in database research~\cite{datalogo, datalog-in-wonderland,dbsp,rings-ivm-koch, dbtoaster}. To that end, we introduce a general model that can capture what guarantees can be offered without coordination in both settings. Our model is based on queries over \emph{semiautomata} and the guarantees to users are captured as properties of those semiautomata.

The main contributions of the paper are as follows:
\begin{enumerate}
    \item Using semiautomata, we introduce the notion of free termination (Section~\ref{sec:free:termination}) in a state transition system, and show how it can be used to model different types of applications, including incremental view maintenance~\cite{dbsp} and the pre-semiring data model used to extend Datalog~\cite{datalogo}.
    \item We then explore how the algebraic properties of the system and the query affect free termination (Section~\ref{sec:partial-orders}). Among our results, we show that under acyclic state modifications (commonly found in CRDTs), the only queries with free termination are a particular class of threshold queries over the natural partial order of states. We also show that if updates form a group or ring (e.g., in incremental view maintenance models), free termination cannot be achieved.
    \item We study how to model coordination-free query computation in a distributed setting via the lens of free termination (Section~\ref{sec:distributed-systems}). Interestingly, we show that by using the notion of free termination we can achieve a stronger and more fine-grained notion of coordination-freeness that applies to a pair of a query and an input. We show how coordination-freeness for network transducers as defined in~\cite{calmTheorem} is a specific case of our more general notion of coordination-freeness. This allows us to characterize other queries as coordination-free, for example, antitone queries.
    \item Finally, we look into free termination when the state space is finite (Section~\ref{sec:finite}). We give a linear-time algorithm for deciding all free termination states in a transition system and also study how to perform state minimization. 
\end{enumerate}

\subsection{Motivating Examples}
Before we proceed, we provide some motivating examples from the literature. We start with the domain of CRDTs, which have become quite popular with software engineers and shed light on both of our goals.

\textbf{Grow-Only Set CRDT:} A common CRDT is the ``grow-only set'' CRDT: this is a replicated distributed set where each machine propagates its local set to other machines in the system. Upon receiving a message containing the local set at another machine, the local machine will apply its ``merge'' operation, modifying its local state to be the union of its current state and the incoming set.
This process of propagation over an asynchronous network introduces non-determinism as messages might be delivered in a different order than they are sent and messages may arrive multiple times. The fact that set union forms a semi-lattice ensures that, regardless of these sources of network nondeterminism, the state at each machine will eventually converge to the same value. That eventual value is the union of all of the initial states at each machine and the time at which it is reached is called the \textit{quiescence point}. 

Grow-only sets illustrate the point that CRDTs provide coordination-free consistency, but do not support free termination.
In the absence of coordination, we do not have a mechanism for determining locally whether we have received all the elements in the network---i.e. whether we have reached a quiescence point. Without a guarantee of quiescence, what should we do to answer a query from a user? CRDTs allow arbitrary queries at any time and make no guarantee on the relationship between the query result at time $t$ and the query result at the quiescence point. For example, consider the query $Q(x) = R(x) - T(x)$. A value $(a)$ may be returned at time $t$, but eventually be excluded from the final output via subsequent ``merges'' into $T(a)$. This is not a particularly satisfactory contract between the system and the user: what good is distributed state if you do not know when you can query it reliably?

\textbf{Threshold Queries:}
Threshold queries characterize a class of queries over CRDTs in which a machine is able to unilaterally detect that the output will never change regardless of future applications of the merge operation (set union in our example). For example, consider the query $Q() = |R(x)| > 10$. This boolean query is monotone with respect to the partial order of our semilattice (which is ordered by subset containment). Since the merge operation can only increase the position in the partial order, once the local state contains more than ten tuples in $R$, the result of the query is guaranteed to be the same at the quiescence time. By contrast, if the quiescence-time answer to this query is false---i.e. $\neg(|R(x)| > 10)$---then no machine will ever return an answer to this query. From a local perspective, a machine cannot know if there is some additional element out there that it has not heard about and will someday need to union into its local state. Threshold queries, while useful, only offer free termination for database instances where the answer is true! 

\textbf{CALM and Relational Transducers:}
The CALM Theorem~\cite{calmTheorem} uses relational transducer networks~\cite{relationalTransducers} to prove the relationship between queries expressed in monotonic logic and coordination-freeness. This formalism allows for the expression of distributed programs in terms of logical formulas that are evaluated iteratively on each machine and communicate state between machines. Similar models have been used by developers to implement distributed programs (e.g., Webdamlog~\cite{relationalTransducers}, 
Bloom~\cite{bloom}). While both the CALM theorem and CRDTs use proofs based on notions of monotonicity, there remains a gap between the logic formalism of the CALM theorem and the state-based formalisms of other work on coordination-freeness. 
In this paper, we are not interested in the programmability differences between these approaches
but exclusively in reasoning about guarantees on query results without coordination. 
In departing from the realm of logic programming, we want to reason naturally about programs beyond the boolean or relational setting. For example, consider a system that takes on integer values and supports the multiplication operation by a user-specified input integer. Now consider the query "is the running product divisible by two?". This query is not monotone with respect to the traditional ordering of the integers, but it can be computed coordination-free on input instances that contain any even number under the free-termination framework we introduce throughout this paper. Once we have multiplied by an even number, all possible future states will return True on this query.

As mentioned above, transducers were used to prove coordination-free consistency guarantees, but they do not directly address termination. As an example, consider a network of transducers supporting the Datalog language, and a simple program like transitive closure. Transducers accumulate knowledge about the set of paths in the global input, and once they learn about a path they can output it with certainty, but they will never conclude that they have finished finding new paths. Indeed, this depends on the database instance! We will return to this point in Section ~\ref{sec:distributed-systems}.

\section{Related Work}
The theory of what is computable without coordination in a distributed setting has been studied across different research communities and different theoretical models. In database theory, this has been studied in terms of the CALM Theorem ("Consistency as Logical Monotonicity")~\cite{declarativeImperative, calmTheorem, weakerMonotonicity, zinn-win-move, tim2023} using the relational transducer model of computation~\cite{relationalTransducers}. 

In related work in the programming languages community, 
conflict-free replicated data types (CRDTs) ~\cite{crdts} achieve coordination-free programs via the algebraic properties of functions that can modify the state of data.
CRDTs have found popularity amongst practicing software developers and are used in a variety of production software systems including Redis, Riak, ElectricSQL, SQLSync, Ditto, JupyterLab, and SoundCloud. 

CRDTs have been criticized for the guarantees that they offer~\cite{crdtOn}. 
To resolve this gap, several systems have combined semilattice state convergence with monotonic queries or functions over that state, including Bloom$^L$~\cite{bloomL}, Lasp~\cite{lasp, miller2017}, Datafun~\cite{datafun} and Hydroflow~\cite{hydro-initial-steps}.

For coordination-free termination detection,
LVars~\cite{lvars} was early in proposing the use of monotone threshold functions from any semilattice $L$ to a smaller lattice like the booleans $(\mathbb{B}, \vee)$. If a computation exceeds a threshold in $L$, the threshold function evaluates to \texttt{true}. Since \texttt{true} is the top (supremum) of $(\mathbb{B}, \vee)$, nothing can change the result, so computation can safely terminate. This idea can be used in the languages mentioned above.

Efforts at adding non-monotonic functions or queries to these languages have typically fallen back to the use of coordination.
Languages like Gallifrey~\cite{gallifrey} and Bloom$^L$ only guarantee consistency when non-monotone constructs are preceded by a round of coordination.

Beyond semilattices, recent research has explored alternative algebraic structures for data systems that each offer their own potential optimizations. 
Pre-semirings have been shown to offer semi-naive fixpoint evaluation in Datalog$^o$~\cite{datalogo, datalog-in-wonderland}. Abelian groups~\cite{dbsp} and rings~\cite{rings-ivm-koch, dbtoaster} have been shown to enable efficient incremental computation of materialized views. While each of these algebraic structures has shown promise in database contexts, one cannot always make use of them simultaneously (Section~\ref{sec:groups}).

In this work, we formalize the \textit{free termination} property for certainty of query answers, generalize it beyond the case of monotonic state with threshold queries, and show directly how this property manifests in non-CRDT settings like Datalog, relational transducer networks, and algebraic models of incremental view maintenance.

\section{Definitions of Free Termination}
\label{sec:free:termination}

We will capture different scenarios in distributed computation using the general computational model of a {\em semiautomaton}~\cite{algebraicAutomataBoook}. We assume that each node keeps a {\em state} that is represented by an element $s$ in a (finite or infinite) state space $D$. Computation at each node is modeled by a modification of the state $s \in D$: the {\em transition (or update) function} $U: D \times L \rightarrow D$ takes a state $s$, a parameter $\ell \in L$ from some domain $L$ and outputs a state that $s$ can transition to. We will often write $s \xrightarrow{\ell} s'$ to denote that $U(s,\ell) = s'$ (we will often omit $\ell$ if it is not of importance). 

\begin{definition}[Semiautomaton]
A {\em semiautomaton} is a triple $\sys = (D,L,U)$, where $D$ is a set called the state space, $L$ is a set called the parameter space, and $U: D \times L \rightarrow D$ is a total function called the update function.
\end{definition}

The update function $U$ can take different forms. A common scenario is when $L = D$, and then $U: D \times D \rightarrow D$ is a binary function. Another interesting case is when the update is \em parameter-independent \em, meaning the state transition is independent of the parameter $\ell \in L$. When the state space $D$ is finite, then $\sys$ can be thought of as a deterministic finite automaton (DFA), but without the initial state or accept states.

A {\em computation trace} in a semiautomaton with {\em initial state} $s_0$ is a (possibly infinite) sequence of states $s_0 \xrightarrow{} s_1 \xrightarrow{} s_2 \xrightarrow{} \dots$.
We say that $s \in D$ reaches a state $s' \in D$ if there is a finite sequence $s \rightarrow \dots \rightarrow s'$; we will use the shorthand notation $s \twoheadrightarrow s'$ for this and often say that $s$ reaches $s'$. We define $U^{k}(s)$ to be the set of states reachable from $s$ via a sequence of at most length $k$. We also define the {\em closure} of $s$, $U^\infty(s) = \bigcup_{k \geq 0} U^{k}(s)$ to be the set of states reachable from $s$.

\begin{definition}[Transition Graph]
Given a semiautomaton $\sys = (D,L,U)$, its transition graph $G[\sys]$ is a labelled directed graph with vertex set $D$ and edge set $\{(s,s'):\ell \mid s' =U(s,\ell)\}$.
\end{definition}

The output of a computation in a semiautomaton model is captured by a query $Q$, which is a total function $Q: D \rightarrow R$ that maps each state to an element of a (finite or infinite) query range $R$.
We can now define the main notion in this paper.

\begin{definition}[Free Termination State]
Given a semiautomaton $\sys = (D,L,U)$ and a query $Q: D \rightarrow R$, we say that a state $s \in D$ is a \textit{free termination state} if for all states $s' \in U^\infty(s): Q(s) = Q(s')$.
\end{definition}

In other words, if $s$ is a free termination state, any computation trace with initial state $s$ will leave the query result unchanged. This means that the distributed system can output the value of $Q$ without the need to continue the computation.

This model is inspired by algebraic models of data systems such as CRDTs and the group and ring models of incremental view maintenance~\cite{dbsp, rings-ivm-koch}. It is also inspired by application-level consistency~\cite{seeing-is-believing, consistency-without-borders} which considers only the observable state to users (our query result) as important for application guarantees rather than the internal system state. 

The behavior of free termination states can vary significantly. In the example below, we will present some common scenarios of free termination using DFAs; we will see some of these cases throughout the paper. 

\begin{example}

To map a DFA into the model of queries over semiautomatons, the transitions are the same in both models, but the query $Q$ for the semiautomaton returns true if the DFA state is accepting, otherwise false. Figure~\ref{fig:dfa} shows four different scenarios of free termination:
\begin{enumerate}
    \item A free termination state is reachable from each state and all free termination states return the same query result (see Figure~\ref{fig:cat1}). 
    \item All paths lead to a free termination state but different free termination states return different query results (see Figure~\ref{fig:cat2}). 
    \item There are no free termination states (see Figure~\ref{fig:cat3}). 
    \item Some paths lead to free termination states and others do not (see Figure~\ref{fig:cat4}). 
\end{enumerate}

We will see that certain algebraic properties of the semiautomaton and the query imply that we have to fall in one of these cases. For example, if the query is commutative (Appendix ~\ref{sec:commutative}), then free termination falls in the first scenario, while if the semiautomaton forms a group structure this means that no free termination state exists (Section ~\ref{sec:groups}).
\end{example}

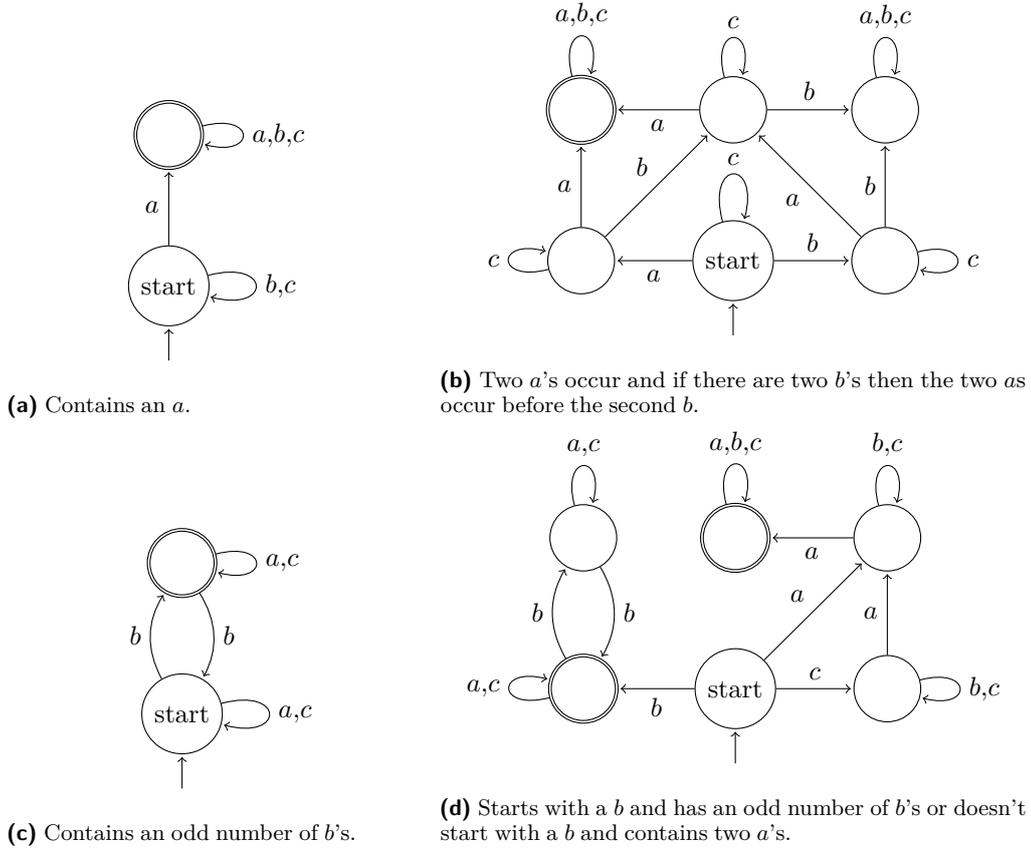
\begin{figure}
    \centering
\begin{subfigure}{0.4\textwidth}    
\centering
\begin{tikzpicture}[shorten >=1pt,node distance=2cm,on grid,auto,initial text=,initial below]
\node[state, initial] (s1) {start};
\node[state, accepting] (s2) [above of=s1] {};
\path[->]
  (s1) edge                node {$a$}  (s2)
  (s1) edge  [loop right]  node {$b,c$}  ()
  (s2) edge  [loop right]  node {$a,b,c$}  ();
\end{tikzpicture}
\caption{Contains an $a$.} \label{fig:cat1}
\end{subfigure}
\begin{subfigure}{0.55\textwidth}   
\centering
\begin{tikzpicture}[shorten >=1pt,node distance=2cm,on grid,auto,initial text=,initial below]
\node[state, initial] (s1) {start};
\node[state] (s2)  [above of=s1] {};
\node[state] (l1) [left of=s1] {};
\node[state,accepting] (l2) [above of=l1] {};
\node[state] (r1) [right of=s1] {};
\node[state] (r2) [above of=r1] {};
\path[->]
  (s1) edge  [loop above]  node {$c$}  ()
  (s1) edge  node {$a$}  (l1)
  (s1) edge  node {$b$}  (r1)
  (l1) edge  [loop left]  node {$c$}  ()
  (l1) edge  node {$a$}  (l2)
  (l1) edge  node {$b$}  (s2)
  (l2) edge  [loop above] node {$a,b,c$}  ()
  (r2) edge  [loop above] node {$a,b,c$}  ()
  (r1) edge  [loop right]  node {$c$}  ()
  (r1) edge  node {$a$}  (s2)
  (r1) edge  node {$b$}  (r2) 
  (s2) edge  [loop above]  node {$c$}  ()
  (s2) edge  node {$a$}  (l2)
  (s2) edge  node {$b$}  (r2) ;
\end{tikzpicture}
\caption{Two $a$'s occur and if there are two $b$'s then the two $a$s occur before the second $b$.} \label{fig:cat2}
\end{subfigure}
\begin{subfigure}{0.4\textwidth} 
\centering
\begin{tikzpicture}[shorten >=1pt,node distance=2cm,on grid,auto,initial text=,initial below]
\node[state, initial] (s1) {start};
\node[state, accepting] (s2) [above of=s1] {};
\path[->]
  (s1) edge  [bend left]  node {$b$}  (s2)
  (s2) edge  [bend left] node {$b$}  (s1)
  (s1) edge  [loop right]  node {$a,c$}  ()
  (s2) edge  [loop right]  node {$a,c$}  ();
\end{tikzpicture}
\caption{Contains an odd number of $b$'s.} \label{fig:cat3}

\end{subfigure}
\begin{subfigure}{0.55\textwidth} 
\centering
\begin{tikzpicture}[shorten >=1pt,node distance=2cm,on grid,auto,initial text=,initial below]
\node[state, initial] (s1) {start};
\node[state,accepting] (l1) [left of=s1] {};
\node[state] (l2) [above of=l1] {};
\node[state] (r1) [right of=s1] {};
\node[state] (r2) [above of=r1] {};
\node[state,accepting] (r3) [above of=s1] {};
\path[->]
  (s1) edge  node {$c$}  (r1)
  (s1) edge  node {$a$}  (r2)
  (s1) edge  node {$b$}  (l1)
  (l1) edge  [loop left]  node {$a,c$}  ()
  (l1) edge  [bend left] node {$b$}  (l2)
  (l2) edge  [bend left] node {$b$}  (l1)
  (l2) edge  [loop above] node {$a,c$}  ()
  (r1) edge  [loop right] node {$b,c$}  ()
  (r1) edge  node {$a$}  (r2)
  (r2) edge  [loop above] node {$b,c$}  ()
  (r2) edge  node {$a$}  (r3) 
  (r3) edge  [loop above]  node {$a,b,c$}  ();
\end{tikzpicture}
\caption{Starts with a $b$ and has an odd number of $b$'s or doesn't start with a $b$ and contains two $a$'s.} \label{fig:cat4}
\end{subfigure}
\caption{Four DFAs with labels $\{a,b,c\}$ showing four different categories of free termination. The doubly-lined states represent the accepting states of the DFA (query is true).} \label{fig:dfa}
\end{figure} 

\subsection{Applications of Free Termination}

We will next describe a few concrete instantiations of our computational model, which will be used throughout this paper.

\introparagraph{Incremental Query Computation}
In the first instantiation, we consider $D$ to be the set of all instances over a fixed relational schema $\mathbf{R}$. The update function $U$ allows modification only by inserting a tuple to the instance; in other words, the set $L$ is the set of all tuples over $\mathbf{R}$ and $U(I,t) = I \cup \{t\}$. We will refer to this semiautomaton as $\sys_\cup$.
A trace in $\sys_\cup$ is a sequence of single-tuple insertions to the initial database. Let us consider a query $Q$ that is a Boolean Conjunctive Query, hence it maps the database instance (state) to its range $R = \{\mathsf{T}, \mathsf{F}\}$.

The characterization of free termination states for $Q$ is fairly straightforward. Indeed, take any instance $I$ in the state space such that $Q(I)$ is true; then, because $Q$ is a monotone query and the updates are only tuple insertions, any $I'$ reachable from $I$ will also have $Q(I') = \mathsf{T}$. Thus, such an $I$ will be a free termination state. On the other hand, if $Q(I) = \mathsf{F}$, free termination is not possible since we can always insert a sequence of tuples to make $Q$ true. As we will see later on, this is a special case of a more general characterization of free termination.

\introparagraph{State-based CRDTs}
A state-based CRDT defines three functions over its state $D$: an \textit{update} operation that allows state to be mutated from outside the system ($\textsf{update}: D \times L \rightarrow D$), a \textit{merge} operation that determines how the states of two replicas can combine to converge to the same state ($\textsf{merge}: D \times D \rightarrow D$), and a \textit{query} operation that defines what is visible to the user from the internal state ($Q: D \rightarrow R$). The basic idea is that updates from users can modify the state of any replica in the distributed system and asynchronous gossip of replica states in the background will ensure each replica eventually converges to the same state regardless of duplicated or out-of-order message delivery. To guarantee this convergence (formally known as \textit{strong eventual consistency}), the merge operation must be associative, commutative, and idempotent and the update operation must be inflationary with respect to the partial ordering induced by the merge operation. In Appendix ~\ref{sec:crdt-examples}, we give four popular examples of state-based CRDTs.

To capture a state-based CRDT in our model, we define it as a semiautomaton $\sys_\square = (D,L,U)$, where $D$ is the set of states, and the update $U$ captures both the update and merge operation of the CRDT. Specifically, $U(s, \langle \textsf{u}, \ell \rangle) = \textsf{update}(s,\ell)$ and $U(s, \langle \textsf{m}, s' \rangle) = \textsf{merge}(s,s')$.
A crucial property of a state-based CRDT is that the partial order induced by the merge operation forms a join-semilattice. Observe that because of this, the state transition system for a state-based CRDT will always be acyclic - a property that we will explore in detail in Section ~\ref{sec:partial-orders}.

\introparagraph{Fixpoint Computation}
Consider a partial order $\sqsubseteq$ over a domain $D$, and take $f: D \rightarrow D$ to be a monotone function w.r.t. $\sqsubseteq$. Further consider a (parameter-independent) semiautomaton and assume that it captures a fixpoint computation from a starting state $s_0 \rightarrow f(s_0) \rightarrow f(f(s_0)) \rightarrow \dots$ that eventually reaches a state $s$ with $f(s)=s$ after finitely many steps. For any query $Q$ over $D$, the fixpoint state $s$ is a free termination state (no other state is reachable from it). However, the structure of $Q$ may allow us to find an earlier free termination state before we even reach the fixpoint.

As an example, consider the case where the fixpoint computation is an iterative (naive or semi-naive) evaluation of a Datalog program $P$ on an instance $I$. Here, the starting state is the set of \textsf{EDB}s and each iteration during evaluation is an update that adds to the current state the newly produced \textsf{IDB} facts via the rules of the program. Our query $Q$ is a view over the \textsf{IDB}s of the program $P$, which can be thought of as the "target" of $P$. Concretely, consider the following Datalog program that determines whether there exists a path between vertices $s$ and $t$ in a graph.
\begin{align*}
P(x,y) & \leftarrow Edge(x,y) \\
P(x,y) & \leftarrow P(x,z), Edge(z,y) \\
Q() & \leftarrow P(s,t)
\end{align*}

For this program, we can freely terminate as long as the state contains the \textsf{IDB} fact $P(s,t)$, even before we have reached a fixpoint. This is because the fixpoint computation is monotone w.r.t. set containment (so the update $U$ is inflationary) and $Q$ is a monotone query as well w.r.t. the partial order $\mathsf{F} \sqsubseteq \mathsf{T}$. If we replaced $Q$ with another query $Q'() \leftarrow P(x,x)$ which detects the presence of a cycle in the graph, we also can freely terminate earlier as long as the state contains a cycle. As we will see in the next section, both of these are examples of Boolean threshold queries.

Datalog$^o$~\cite{datalogo} is an extension of Datalog to support recursive queries over partially ordered pre-semirings (POPS). This can also be viewed as a parameter-independent fixpoint computation. Datalog$^o$ requires that both semiring operations be monotone w.r.t. the partial order of the POPS. This tells us that the update transition for any Datalog$^o$ graph is inflationary, which we show in Proposition ~\ref{prop:inflationary-threshold-ft} implies any Boolean monotone query will have a free termination state. For instance, the query $Q$ below that computes whether the distance between vertices $s,t$ in a graph is at most 10.
\begin{align*}
P(x,y) & \leftarrow min(Edge(x, y), min \{ (P(x,z) + Edge(z, y))) \\
Q() & \leftarrow P(s, t) \leq 10
\end{align*}

\section{Algebraic Properties and Free Termination}

In this section, we explore how the algebraic structures of the semiautomaton and the query affect free termination. 

\subsection{Partial Orders}
\label{sec:partial-orders}
Consider the example in Figure ~\ref{fig:threshold-query-example} of a semiautomaton in which update labels are singleton sets ($\{a\}$, $\{b\}$, or $\{c\}$) and the update applies set union of the incoming singleton to the current state. Labels on edges depict the incoming singleton that transitions from the source state to the destination state. We can think of this system as inputs from ($\{\{a\}$, $\{b\}$, or $\{c\}\}$) streaming in over time and we wish to compute some query over this stream of inputs without ever knowing if the stream has ended. The figure depicts the query "contains an $a$" over this semiautomaton, illustrated by the dotted green line that is the "threshold" after which states in the partial order return True in the query. Observe that the same graph with the edge labels and self-loops removed is the Hasse diagram of the set/subset partial ordering over atoms $\{a$, $b$,  $c\}$.

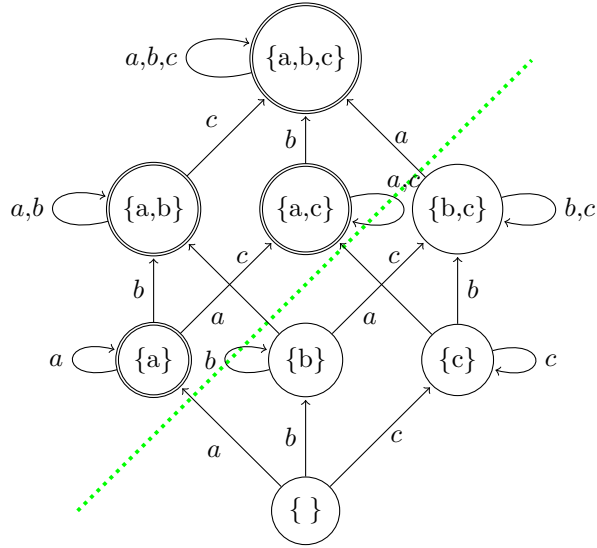
\begin{figure}
\centering
\begin{tikzpicture}[shorten >=1pt,node distance=2cm,on grid,auto,initial text=,initial below]
\draw[dotted, line width=0.5mm, color=green] (-3,0) -- (3,6);
\node[state] (bot) {\{ \}};
\node[state] (b) [above of=s1] {\{b\}};
\node[state, accepting] (a) [left of=b] {\{a\}};
\node[state] (c) [right of=b] {\{c\}};
\node[state,accepting] (ab) [above of=a] {\{a,b\}};
\node[state,accepting] (ac) [above of=b] {\{a,c\}};
\node[state] (bc) [above of=c] {\{b,c\}};
\node[state, accepting] (abc) [above of=ac] {\{a,b,c\}};
\path[->]
  (bot) edge  node {$b$}  (b)
  (bot) edge  node {$a$}  (a)
  (bot) edge  node[right] {$c$}  (c)
  (a) edge  node {$b$}  (ab)
  (a) edge  node[above, yshift=2mm, xshift=2mm] {$c$}  (ac)
  (b) edge  node[above, yshift=2mm, xshift=2mm] {$c$}  (bc)
  (b) edge  node[below, yshift=-2mm, xshift=-2mm] {$a$}  (ab)
  (c) edge  node[below, yshift=-2mm, xshift=-2mm] {$a$}  (ac)
  (c) edge  node[right] {$b$}  (bc)
  (ac) edge  node {$b$}  (abc)
  (ab) edge  node {$c$}  (abc)
  (bc) edge  node[right] {$a$}  (abc)
  (a) edge  [loop left]  node {$a$}  ()
  (b) edge  [loop left]  node {$b$}  ()
  (c) edge  [loop right]  node {$c$}  ()
  (ab) edge  [loop left]  node {$a,b$}  ()
  (bc) edge  [loop right]  node {$b,c$}  ()
  (ac) edge  [loop right]  node[above, yshift=1mm] {$a,c$}  ()
  (abc) edge  [loop left]  node {$a,b,c$}  ();
\end{tikzpicture}
\caption{A state transition system for the set union state transition over the universe $\{a,b,c\}$. The green dotted line across the states indicates the threshold line for the query "the set contains an $a$". The doubly-lined states return True and single-lined states return False in the query.} \label{fig:threshold-query-example}
\end{figure}

We will utilize properties of this example to reason about its free termination states. In this particular case, the update function is \textit{inflationary}. If we view the ordering of $\mathbb{B}$ as $\mathsf{F} \sqsubseteq \mathsf{T}$ then we can see that $Q$ is \textit{monotone}. Formally, to define what it means for $Q$ to be monotone we must have some partial order on the sets $D$ and $R$. Recall that a binary relation $\sqsubseteq$ is a partial order if it is reflexive, transitive and antisymmetric.

\begin{definition}[Inflationary]
Let $\sys = (D,L,U)$ be a semiautomaton equipped with a partial order $\sqsubseteq_D$ on $D$. $\sys$ is {\em inflationary (resp. deflationary)} w.r.t $\sqsubseteq_D$ if whenever $s \xrightarrow{} s'$ then  $ s \sqsubseteq_D s'$ (resp. $ s \sqsupseteq_D s'$).
\end{definition}

In an inflationary semiautomaton, any update of a state will always follow the underlying partial order of the state space. A deflationary system will follow the partial order in reverse. 

In many cases, we can define a "natural" partial order on $D$ via the update function $U$. We say that $U$ is {\em acyclic} in $\sys$ if the transition graph G[$\sys$] is acyclic (excluding self-loops). The following proposition follows from the fact that the transitive closure of a directed acyclic graph is a partial order.

\begin{proposition}
Let $\sys = (D,L,U)$ and $U$ be acyclic. Then, the relation $s \sqsubseteq_U s' \Leftrightarrow s \twoheadrightarrow s'$ is a partial order for $D$. Moreover, $\sys$ is inflationary w.r.t. $\sqsubseteq_U$.
\end{proposition}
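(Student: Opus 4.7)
The plan is to verify the three partial-order axioms for $\sqsubseteq_U$ and then check inflationarity directly from the definition of the update relation. Because $\sqsubseteq_U$ is literally the reachability relation $\twoheadrightarrow$, each axiom should follow by a short argument about paths in $G[\sys]$, with acyclicity doing work only in the antisymmetry step.

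First, for reflexivity, I would appeal to the convention already set in the paper that $U^{0}(s) = \{s\}$, so $s \in U^\infty(s)$ and hence $s \twoheadrightarrow s$ via the empty sequence. For transitivity, given $s \twoheadrightarrow s'$ and $s' \twoheadrightarrow s''$ I would simply concatenate the two finite witnessing sequences to obtain a finite sequence from $s$ to $s''$, giving $s \twoheadrightarrow s''$. Neither step uses acyclicity.

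For antisymmetry, suppose $s \twoheadrightarrow s'$ and $s' \twoheadrightarrow s$ with $s \neq s'$. Concatenating the two witnessing sequences yields a closed walk in $G[\sys]$ that visits both $s$ and $s'$, so it contains at least one edge between distinct vertices and is therefore not a self-loop. This contradicts the hypothesis that $G[\sys]$ is acyclic (excluding self-loops), so we must have $s = s'$. The one small care point is to make sure the closed walk actually contains a non-self-loop cycle; since it passes through two distinct vertices, any standard argument extracting a simple cycle from a closed walk of length $\geq 1$ between distinct vertices delivers one. This is the step where acyclicity is essential and is the only place the proof has any content beyond unpacking definitions.

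Finally, for inflationarity, if $s \xrightarrow{} s'$ via some $\ell \in L$, then the single-edge sequence witnesses $s \twoheadrightarrow s'$, so $s \sqsubseteq_U s'$ by definition, which is exactly what inflationarity w.r.t.\ $\sqsubseteq_U$ requires. I do not anticipate any real obstacle here; the whole proposition is essentially the observation that the reflexive--transitive closure of an acyclic relation (modulo self-loops) is a partial order, packaged for the semiautomaton setting.
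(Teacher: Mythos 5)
Your proof is correct and matches the paper's approach: the paper gives no explicit proof, simply remarking that the proposition ``follows from the fact that the transitive closure of a directed acyclic graph is a partial order,'' and your argument is exactly the standard unpacking of that fact (with acyclicity used only for antisymmetry) plus the immediate observation for inflationarity.
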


Whenever $U$ is acyclic, we will refer to the partial order $\sqsubseteq_U$ as the {\em natural partial order} of $\sys$. This is akin to the notion of natural orders for algebraic structures such as pre-semirings~\cite{datalogo}. If the update transition is a binary operation $U:D \times D \rightarrow D$ forming a monoid (is associative and has an identity) then the natural ordering would be the standard $x \sqsubseteq_U y$ iff $\exists z : x \xrightarrow{z} y$.

\begin{example}
Consider again the incremental query computation semiautomaton $\sys_\cup$. The update operation is acyclic (since an update only adds tuples to the instance), and the natural partial order is set containment of the states.
\end{example}

\begin{definition}[Monotone Query]
Let $\sys$ be a semiautomaton equipped with a partial order $\sqsubseteq_D$ on $D$. Let $R$ be a set with partial order $\sqsubseteq_R$. A  query $Q: D \rightarrow R$ is {\em monotone (resp. antitone)} w.r.t. $\sys$ if $s \sqsubseteq_D s'$ implies $Q(s) \sqsubseteq_R Q(s')$ (resp. $Q(s) \sqsupseteq_R Q(s')$).
\end{definition}

\introparagraph{Free Termination Conditions}
Intuitively, if the input state will only increase in the partial order because $U$ is inflationary, and $Q$ is monotone, then over time the output of $Q$ will always stay the same or increase. However, free termination is concerned specifically with when the output of $Q$ will {\em stay the same}. We identify two general conditions that guarantee free termination in this case.

\begin{proposition}
\label{inflationary-monotone-implies-ft}
Let $\sys = (D,L,U)$ be a semiautomaton equipped with a partial order $\sqsubseteq_D$. If $\sys$ is inflationary (resp. deflationary) and $s \in D$ is a maximal (resp. minimal) element of $\sqsubseteq_D$, then $s$ is a free termination state.
\end{proposition}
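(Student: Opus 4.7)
The plan is to show the inflationary/maximal case directly (the deflationary/minimal case being symmetric), and the argument turns out to be almost immediate from the definitions once I unpack what maximality means for an inflationary transition system. The goal, per the definition of free termination, is to show that for every $s' \in U^\infty(s)$, we have $Q(s) = Q(s')$.

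First, I would argue that the only state reachable from a maximal $s$ is $s$ itself. Suppose $s \xrightarrow{} t$ for some $t \in D$. By the inflationary hypothesis, this gives $s \sqsubseteq_D t$. But $s$ is maximal, so by definition any $t$ with $s \sqsubseteq_D t$ must satisfy $t = s$ (using antisymmetry of $\sqsubseteq_D$ to rule out the $t \neq s$ case). Thus every one-step successor of $s$ is $s$ itself.

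Next, I would lift this from one-step transitions to the full closure by a straightforward induction on the length of the trace. Since $U^0(s) = \{s\}$ and $U^{k+1}(s) \subseteq \bigcup_{t \in U^k(s)} \{t' : t \xrightarrow{} t'\}$, the base case and inductive step both give $U^k(s) = \{s\}$, so $U^\infty(s) = \bigcup_{k \geq 0} U^k(s) = \{s\}$. Then for any $s' \in U^\infty(s)$ we have $s' = s$, hence $Q(s') = Q(s)$ trivially, and $s$ is a free termination state.

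The deflationary/minimal case is symmetric: replace $\sqsubseteq_D$ with $\sqsupseteq_D$ throughout, use the deflationary hypothesis to get $s \sqsupseteq_D t$ from $s \xrightarrow{} t$, and use minimality of $s$ to conclude $t = s$. I don't anticipate any real obstacle here; the proposition is essentially a sanity check that ``nowhere to go'' implies ``nothing to observe change,'' and notably it requires no assumption on $Q$ at all, since the conclusion is established purely by trivializing the reachable set.
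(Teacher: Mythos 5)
Your proof is correct and takes essentially the same route as the paper: the paper chains the inflationary property with transitivity to get $s \sqsubseteq_D s'$ for every reachable $s'$ and then invokes maximality once, whereas you invoke maximality at each single step and induct to show $U^\infty(s) = \{s\}$ --- a cosmetic rearrangement of the same argument. Your closing observation that no assumption on $Q$ is needed is accurate and consistent with the paper's statement.
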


\begin{proof}
Consider any state $s' \in D$ such that $s' \in U^\infty(s)$. Then, because $\sys$ is inflationary and $\sqsubseteq_D$ is transitive, we must have that $s \sqsubseteq_D s'$. But $s$ is a maximal element of  $\sqsubseteq_D$, hence $s = s'$. Thus, $Q(s) = Q(s')$. The proof for when $U$ is deflationary and $s$ a minimal element is symmetrical.
\end{proof}

In some cases, we will consider partial orders $\sqsubseteq_D$ with a bottom element $\bot$ or a top element $\top$: then, we have that for every $s \in D$, $\bot \sqsubseteq_D s$ or $s \sqsubseteq_D \top$ respectively. Note that if a top element exists, then $\top$ is a free termination state by the above proposition.


\begin{proposition}
Let $\sys = (D,L,U)$ be a semiautomaton with a  partial order $\sqsubseteq_D$. Let $Q: D \rightarrow R$ be a query, with $R$ equipped with a partial order $\sqsubseteq_R$. If $\sys$ is inflationary, $Q$ is monotone (resp. antitone), and $Q(s)$ is a maximal (resp. minimal) element of $\sqsubseteq_R$,  then $s \in D$ is a free termination state.
\label{top-in-r-free-terminates}
\end{proposition}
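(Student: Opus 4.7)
The plan is to mirror the argument of Proposition~\ref{inflationary-monotone-implies-ft}, but instead of using maximality in $\sqsubseteq_D$ to collapse the reachable set of states to $\{s\}$, I will use maximality of $Q(s)$ in $\sqsubseteq_R$ to collapse the image of the reachable set under $Q$ to $\{Q(s)\}$. Concretely, I will fix an arbitrary $s' \in U^\infty(s)$ and show $Q(s') = Q(s)$ by chaining two monotonicity steps: first lift reachability from $D$ to the partial order $\sqsubseteq_D$, then push through $Q$ into $\sqsubseteq_R$.

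For the first step, I would argue by induction on the length $k$ of a reaching sequence that $s' \in U^k(s)$ implies $s \sqsubseteq_D s'$. The base case $k=0$ gives $s = s'$ and uses reflexivity of $\sqsubseteq_D$. For the inductive step, any $s' \in U^{k+1}(s)$ is of the form $s'' \xrightarrow{} s'$ for some $s'' \in U^k(s)$; by induction $s \sqsubseteq_D s''$, by inflationarity $s'' \sqsubseteq_D s'$, and transitivity of the partial order finishes the step. (This is really the same fact already used implicitly in the proof of Proposition~\ref{inflationary-monotone-implies-ft}, and can also be stated as a lemma if desired.)

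For the second step, monotonicity of $Q$ gives $Q(s) \sqsubseteq_R Q(s')$. Then the maximality assumption on $Q(s)$ in $\sqsubseteq_R$ — together with antisymmetry of $\sqsubseteq_R$ — forces $Q(s) = Q(s')$. Since $s'$ was arbitrary in $U^\infty(s)$, this is exactly the definition of $s$ being a free termination state. The antitone/minimal case is completely symmetric: monotonicity is replaced by $Q(s) \sqsupseteq_R Q(s')$, and minimality together with antisymmetry again forces equality.

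I do not anticipate a real obstacle here; the statement is essentially a two-layer application of Proposition~\ref{inflationary-monotone-implies-ft}, where the hypothesis ``$s$ is maximal in $D$'' is weakened to ``$Q(s)$ is maximal in $R$'' at the cost of requiring $Q$ to be monotone so that maximality transports through $Q$. The only subtlety worth flagging is that ``maximal'' here means maximal in the partial-order sense (nothing is strictly above it), so the closing step needs antisymmetry of $\sqsubseteq_R$ rather than $Q(s)$ being a top element; this is why the argument works cleanly without assuming $\sqsubseteq_R$ has a greatest element.
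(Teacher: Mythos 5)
Your proposal is correct and follows essentially the same route as the paper's proof: lift reachability to $s \sqsubseteq_D s'$ via inflationarity and transitivity, push through monotonicity of $Q$ to get $Q(s) \sqsubseteq_R Q(s')$, and conclude by maximality. The only difference is that you spell out the first step as an explicit induction and flag the role of antisymmetry, which the paper leaves implicit.
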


\begin{proof} 
Consider any state $s' \in D$ such that $s' \in U^\infty(s)$. Then, because $\sys$ is inflationary and $\sqsubseteq_D$ is transitive, we must have that $s \sqsubseteq_D s'$. From the monotonicity of $Q$, this implies that $Q(s) \sqsubseteq_R Q(s')$. But because $Q(s)$ is maximal, we must have that $Q(s') = Q(s)$. The case where $Q$ is antitone and $Q(s)$ is a minimal element is symmetric.
\end{proof}




\begin{example}
A classic example of an antitone query over $\mathbb{B}$ is the use of a universal quantifier $\forall$. Consider the natural inverse of the existential quantifier $\exists$ used in our first example. If we let $Q$ be “every element of the input stream is an $\{a\}$” then we can freely terminate when the answer becomes $\mathsf{F}$ (as soon as any non-"a" streams in).
\end{example}

\introparagraph{Threshold Queries}
In this part, we ask the following question: what classes of queries have free termination states? We will restrict our attention to settings where the behavior of $\sys$ is inflationary as threshold queries are naturally valuable in this setting. Further, all state-based CRDTs are examples of $\sys$ that are inflationary. We define first an important class of queries, inspired by the LVars~\cite{lvars} work on coordination-free programming languages, called Threshold Queries. 
Recall that an {\em antichain} in a partial order $\sqsubseteq$ with domain $D$ is a subset $C \subseteq D$ such that no two distinct elements of $C$ are comparable under $\sqsubseteq$.

\begin{definition}[Boolean Threshold Query]
Let $\sqsubseteq$ be a partial order with domain $D$, and $C \subseteq D$ be an antichain. 
A \textit{Boolean threshold query} $Q_C: D \rightarrow \mathbb{B}$ with threshold line $C$ is the Boolean query that returns the value $Q_C(s) = \bigvee_{c \in C} (s \sqsupseteq c)$. 
\end{definition}

Boolean threshold queries are monotone and this allows us to obtain the free termination states in this case.

\label{monotone-boolean-implies-threshold}

\begin{proposition}
\label{prop:inflationary-threshold-ft}
Let $\sys = (D,L,U)$ be an inflationary semiautomaton equipped with a partial order $\sqsubseteq_D$. Let $Q_C$ be a Boolean threshold query. Then, the free termination states  are exactly the elements of $D$ that are at or above $C$.
\end{proposition}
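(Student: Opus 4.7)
The plan is to prove the two inclusions separately. Call a state \emph{at or above $C$} if $s \sqsupseteq_D c$ for some $c \in C$, equivalently $Q_C(s) = \mathsf{T}$. I want to show (i) every state at or above $C$ is a free termination state, and (ii) every free termination state is at or above $C$.

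For direction (i), I would first observe that $Q_C$ is monotone with respect to $\sqsubseteq_D$ when $\mathbb{B}$ is given its natural order $\mathsf{F} \sqsubseteq \mathsf{T}$: if $s \sqsubseteq_D s'$ and $Q_C(s) = \mathsf{T}$, then some $c \in C$ satisfies $c \sqsubseteq_D s \sqsubseteq_D s'$, so by transitivity $Q_C(s') = \mathsf{T}$. Given $s$ at or above $C$, we have $Q_C(s) = \mathsf{T}$, which is the maximum element of $\mathbb{B}$. Direction (i) is then a direct instantiation of Proposition \ref{top-in-r-free-terminates}: an inflationary semiautomaton with a monotone query whose value is maximal in the range already yields free termination.

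For direction (ii), I would argue the contrapositive. Suppose $s$ is not at or above any $c \in C$, so $Q_C(s) = \mathsf{F}$. I need to exhibit some $s' \in U^\infty(s)$ with $Q_C(s') = \mathsf{T}$, i.e., a reachable state that lies above some $c \in C$. The natural move is to fix an arbitrary $c \in C$ and use an update that combines $s$ with $c$ (for a state-based CRDT $\sys_\square$, this is the $\langle \textsf{m}, c\rangle$ merge transition, giving $s \mapsto s \sqcup c$, which satisfies $s \sqcup c \sqsupseteq_D c$ so $Q_C(s \sqcup c) = \mathsf{T}$). By inflationariness $s \sqsubseteq_D s \sqcup c$, and since $Q_C(s) \neq Q_C(s \sqcup c)$, $s$ cannot be a free termination state.

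The main obstacle is direction (ii): the statement as written only assumes an inflationary $\sys$ and a partial order, but the contrapositive argument above requires a \emph{reachability} assumption, namely that from any $s$ one can update to some state $\sqsupseteq_D c$ for at least one $c \in C$. This is automatic in the motivating state-based CRDT / semilattice setting (and in $\sys_\cup$), where a merge with $c$ is always available as a transition, but it needs to be made explicit in the general semiautomaton formulation. I would either invoke this as an implicit hypothesis inherited from the running class of examples, or strengthen the proposition to require that for every $s \in D$ and $c \in C$ there exists $s' \in U^\infty(s)$ with $s' \sqsupseteq_D c$, which is exactly what makes direction (ii) go through.
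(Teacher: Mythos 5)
Your direction (i) is exactly the paper's argument: establish monotonicity of $Q_C$ from transitivity of $\sqsubseteq_D$, note that $\mathsf{T}$ is maximal in $\mathbb{B}$, and invoke the earlier proposition that an inflationary semiautomaton with a monotone query at a maximal output value yields free termination. No divergence there.

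Where you differ is that you actually attempt the reverse inclusion, and your diagnosis of it is correct: the paper's own proof stops after the forward direction and never argues that states below the threshold fail to freely terminate, even though the proposition claims the free termination states are \emph{exactly} those at or above $C$. As you observe, that converse is not derivable from ``inflationary plus a partial order'' alone. A concrete witness: take $U(s,\ell)=s$ for all $s,\ell$, which is inflationary with respect to any partial order; then $U^\infty(s)=\{s\}$ for every $s$, so every state is a free termination state, including any state with $Q_C(s)=\mathsf{F}$, contradicting ``exactly'' whenever such a state exists. Your proposed repair --- requiring that from every $s$ some state $s' \sqsupseteq_D c$ for some $c\in C$ is reachable, which holds automatically in $\sys_\cup$ and in the state-based CRDT setting where a merge with $c$ is always an available transition --- is precisely the hypothesis needed to close the argument. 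So your proposal is not missing anything; rather, it correctly identifies that the proposition as stated is stronger than what the given hypotheses (and the paper's proof) support, and it supplies both the counterexample-shaped reasoning and the minimal additional assumption under which the equality of the two sets goes through.
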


\begin{proof}
We first show that $Q_C$ must be monotone. Indeed, let $s \sqsubseteq_D s'$ and assume $Q_C(s)$ is true. Then, there exists $c \in C$ such that  $s \sqsupseteq_D c$. But since $s \sqsubseteq_D s'$, $s' \sqsupseteq_D c$ and $Q_C(s')$ must also be true.
 Since $\mathsf{T}$ is the maximal element of $\mathbb{B}$, all states $s$ such that $Q_C(s)$ is true are free termination states. Note that any element of $D$ that is at or above the threshold line $C$ has this property.
\end{proof}

We should note here that any monotone Boolean query must be a Boolean threshold query (excluding the trivial query that is always false).
In this work, we do not wish to limit ourselves to threshold queries that return only true, or even to thresholds where each $c$ returns the same value $Q(c)$. We can generalize the concept of Boolean threshold queries with the following result.

\begin{proposition} \label{acyclic-implies-threshold}
Let $Q$ be a query with at least one free termination state for a semiautomaton $\sys = (D, L, U)$ where $U$ is acyclic. Then, there exists an antichain $C \subseteq D$ w.r.t. the natural partial order $\sqsubseteq_U$ such that whenever $c \sqsubseteq_U s$ for some $c \in C$, $Q(s) = Q(c)$.
\end{proposition}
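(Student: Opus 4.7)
The key observation that makes this essentially a restatement of a definition is the following: since $U$ is acyclic, by definition of the natural partial order we have $c \sqsubseteq_U s \iff c \twoheadrightarrow s \iff s \in U^\infty(c)$. Hence the condition ``$Q(s) = Q(c)$ whenever $c \sqsubseteq_U s$'' applied to a fixed $c \in C$ is literally the statement that $c$ is a free termination state. So the proposition reduces to: there exists an antichain $C \subseteq D$ consisting entirely of free termination states.

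Under this reformulation the proof is immediate. The hypothesis provides at least one free termination state $s_0 \in D$, and any singleton subset of a poset is vacuously an antichain, so $C = \{s_0\}$ satisfies the claim. This is essentially the whole argument.

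The more interesting construction, which is what makes the result a genuine generalization of Proposition~\ref{prop:inflationary-threshold-ft} (where $C$ serves as a threshold line), is to take $C$ to be the set of \emph{minimal} free termination states under $\sqsubseteq_U$. Then $C$ is an antichain by construction; each $c \in C$ is a free termination state, yielding the required $Q(s) = Q(c)$ on its upward cone; and crucially every free termination state lies above some $c \in C$, so $C$ cleanly partitions the upward-closed region where $Q$ stabilizes, possibly with different constant values $Q(c)$ for different $c$. This recovers the threshold picture without requiring $Q$ to be Boolean.

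The only substantive obstacle is in this refined version: minimal elements of the set $F$ of free termination states need not exist when $D$ is infinite and $\sqsubseteq_U$ is not well-founded. For finite $D$ (the setting of Section~\ref{sec:finite}) this is automatic. In the general case I would argue via Zorn's lemma applied to $F$ under the reverse order $\sqsupseteq_U$, for which the main check is that every descending chain in $F$ has a lower bound lying in $F$. Since the statement of the proposition only demands the existence of \emph{some} antichain with the stated property, the trivial singleton construction suffices for the proof, and the minimal-antichain refinement can be deferred to a remark.
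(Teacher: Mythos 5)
Your proof is correct, and your reduction is the right way to see the statement: since $\sqsubseteq_U$ is reachability, the upward cone of $c$ is exactly $U^\infty(c)$, so the conclusion for a given $C$ is precisely that every element of $C$ is a free termination state, and a singleton antichain built from the hypothesized free termination state settles the literal claim immediately. The paper takes a different, stronger-looking route: it sets $C = F_\tau$, the set of minimal elements of the set $F$ of all free termination states, which is exactly the ``refined version'' you describe. That construction is what justifies reading $C$ as a threshold line generalizing Proposition~\ref{prop:inflationary-threshold-ft}, since (when it works) every free termination state sits above some $c \in C$; your singleton buys none of that but proves the stated proposition with no side conditions. You are also right to flag the gap in the refined construction: the paper asserts the existence of minimal elements of $F$ without justification, and when $D$ is infinite and $\sqsubseteq_U$ is not well-founded, $F_\tau$ can be empty (e.g.\ states $\mathbb{Z}$ with $n \to n+1$ and $Q$ constant on a downward-closed infinite set), in which case the paper's $C$ satisfies the conclusion only vacuously; even your suggested Zorn argument can fail here because descending chains in $F$ need not have lower bounds. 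So for finite or well-founded state spaces the two proofs coincide in spirit and the paper's gives more, while in full generality your singleton argument is the one that actually closes the proof as stated.
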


\begin{proof}
Let $F \subseteq D$ be the set of free termination states. $F$ cannot be empty. Let $F_{\tau}$ be the minimal states in $F$ w.r.t. $\sqsubseteq_U$. By construction, $F_\tau$ forms an antichain and will be our set $C$. Take now any state $s \in D$ such that $c \sqsubseteq_U s$ for some $c \in C$. Since $c$ is a free termination state and $c$ reaches $s$, it must be that $Q(c) = Q(s)$. 
\end{proof}

In other words, there exists a threshold (formed by the antichain $C$) such that at or above the threshold the behavior of $Q$ is governed completely by the threshold states. However, the behavior of $Q$ outside of this threshold space has no restriction (unlike in the case of Boolean monotone queries). 

\introparagraph{Join Semilattices} When the partial order has further algebraic structure, we can characterize the behavior of free termination states more precisely. Here, we consider the case where $\sys$ has a natural partial order $\sqsubseteq_U$ and this order is a join-semilattice (this means that there is a least upper bound for any nonempty finite subset of $D$). 

\begin{proposition} \label{prop:semilattice}
Let $\sys = (D,L,U)$ be a semiautomaton where $\sqsubseteq_U$ forms a join-semilattice. Then, for any query $Q$, all free termination states return the same value.
\end{proposition}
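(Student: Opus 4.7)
The plan is to exploit the join structure directly: if two free termination states exist, route both of them through their join to force $Q$ to agree on them. Formally, let $s_1$ and $s_2$ be any two free termination states in $D$ (if fewer than two exist, the claim is vacuous). Since $\sqsubseteq_U$ is a join-semilattice, the least upper bound $s_1 \vee s_2 \in D$ exists, and by definition of the join we have $s_1 \sqsubseteq_U s_1 \vee s_2$ and $s_2 \sqsubseteq_U s_1 \vee s_2$.

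Next I would unpack the natural partial order. Because $\sqsubseteq_U$ is precisely the reachability order $s \sqsubseteq_U s' \Leftrightarrow s \twoheadrightarrow s'$, the comparability $s_i \sqsubseteq_U s_1 \vee s_2$ means exactly that $s_1 \vee s_2 \in U^\infty(s_i)$ for $i = 1, 2$. Note that for this step to be meaningful we implicitly use that $U$ is acyclic, which is required for $\sqsubseteq_U$ to be well-defined as a partial order in the first place.

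Now I would invoke the free termination hypothesis twice. Since $s_1$ is a free termination state and $s_1 \vee s_2$ is reachable from $s_1$, we get $Q(s_1) = Q(s_1 \vee s_2)$. By the symmetric argument applied to $s_2$, we get $Q(s_2) = Q(s_1 \vee s_2)$. Combining these equalities yields $Q(s_1) = Q(s_2)$, which gives the result.

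I do not foresee a serious obstacle here: the entire argument rests on the observation that the join of two elements lies above both in the natural order, hence is reachable from both, and free termination then forces $Q$ to be constant on the reachable closure. The only subtlety worth a sentence in the final write-up is confirming that the join is an actual element of $D$ (which the semilattice hypothesis guarantees) and that reachability to the join is literal reachability under $U$ (which is how $\sqsubseteq_U$ was defined).
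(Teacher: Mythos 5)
Your proof is correct and follows essentially the same route as the paper's: form the join $s_1 \vee s_2$, observe that the natural order is reachability so the join lies in $U^\infty(s_1)$ and $U^\infty(s_2)$, and apply the free termination hypothesis to each of $s_1$ and $s_2$. The remarks about acyclicity of $U$ and the join being an actual element of $D$ are fine but not needed beyond what the statement already assumes.
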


\begin{proof}
Indeed, take any two free termination states $s_1,s_2$. Since $\sqsubseteq_U$ is a join-semilattice, there exists a least upper bound $s$ such that $s_1 \sqsubseteq_U s$ and $s_2 \sqsubseteq_U s$. Since $\sqsubseteq_U$ is the natural partial order, $s_1 \twoheadrightarrow  s$ and $s_2 \twoheadrightarrow  s$. But $s_1,s_2$ are free termination states, so it must be that $Q(s_1) = Q(s)$ and $Q(s_2) = Q(s)$. Thus, $Q(s_1) = Q(s_2)$.
\end{proof}

As a corollary of the above proposition, any free termination state in a state-based CRDT must have the same value. A special case is when the natural partial order has a top element $\top$; in this case, all free termination states must take the value of $Q(\top)$.

\begin{proposition}
\label{prop:fts-reachable-in-semilattice}
Let $\sys = (D,L,U)$ be a semiautomaton where $\sqsubseteq_U$ forms a join-semilattice. If a query $Q$ has a free termination state in $\sys$, then any state can reach a free termination state.
\end{proposition}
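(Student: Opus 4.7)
The plan is to exploit the join-semilattice structure together with one known free termination state to manufacture, for every state $s$, a free termination state reachable from $s$. Fix an arbitrary free termination state $s^* \in D$ (it exists by hypothesis) and any state $s \in D$. Because $\sqsubseteq_U$ is a join-semilattice, the join $s \sqcup s^*$ exists. By the definition of the natural partial order $\sqsubseteq_U$, both $s \twoheadrightarrow (s \sqcup s^*)$ and $s^* \twoheadrightarrow (s \sqcup s^*)$. So it suffices to show that the single state $s \sqcup s^*$ is itself a free termination state.

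For this step, I would use the free termination of $s^*$ twice. First, because $s^* \twoheadrightarrow (s \sqcup s^*)$ and $s^*$ is a free termination state, we get $Q(s^*) = Q(s \sqcup s^*)$. Second, for any state $s'$ reachable from $s \sqcup s^*$, transitivity of reachability gives $s^* \twoheadrightarrow (s \sqcup s^*) \twoheadrightarrow s'$, so $s'$ is also reachable from $s^*$, and hence $Q(s^*) = Q(s')$. Combining the two equalities yields $Q(s \sqcup s^*) = Q(s')$ for every $s' \in U^\infty(s \sqcup s^*)$, which is exactly the definition of $s \sqcup s^*$ being a free termination state. This establishes that $s$ reaches a free termination state, completing the proof.

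There is no substantial obstacle here: the work is entirely done by the join operation plus the closure-style reasoning that free termination of $s^*$ propagates along reachability. The only subtle point worth flagging explicitly in the write-up is that we are implicitly using that $U$ is acyclic (otherwise $\sqsubseteq_U$ would not be a partial order in the sense of Proposition on natural partial orders), and that the natural partial order encodes reachability on the nose, so $s_1 \sqsubseteq_U s_2$ is equivalent to $s_1 \twoheadrightarrow s_2$. Once that identification is invoked, the argument is a two-line chain of reachability and a single application of the free termination definition.
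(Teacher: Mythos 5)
Your proof is correct and follows the same route as the paper's: form the join of the arbitrary state with a known free termination state, note both reach the join via the natural partial order, and conclude the join is itself a free termination state. You actually spell out the last step (why the join inherits free termination from $s^*$) more explicitly than the paper does, which is a welcome addition rather than a divergence.
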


In other words, if there exists a free termination state, it is possible to reach a free termination state from whichever state we currently are in (with an appropriate set of updates).

\begin{proof}
Let $s$ be the current state and $s_t$ be a free termination state. Since $\sqsubseteq_U$ is a join-semilattice, there exists a least upper bound $s'$ such that $s \sqsubseteq_U s'$ and $s_t \sqsubseteq_U s'$. Since $\sqsubseteq_U$ is the natural partial order, $s \rightarrow s'$ and $s_t \rightarrow s'$. But then $s'$ must also be a free termination state.
\end{proof}

One would be tempted to think that monotone (or antitone) queries are the only ones that have free termination states in a join-semilattice, but this is not true even for Boolean queries. Take for example the semiautomaton $\sys_\cup$ with the Boolean query $Q() = R(c) \wedge \neg S(c)$ for some constant $c$. This query is neither monotone nor antitone, since, for example, it returns false on $\{R(a)\}$, true on $\{R(a), R(c)\}$, and false on $\{R(a), R(c), S(c)\}$. However, it has free termination states: these are the states in which the tuple $S(c)$ is in the instance.

\subsection{Group-Like Structures}
\label{sec:groups}
If $U$ is not acyclic, then we can construct examples where free termination is not possible. We will use the notions of inverses and identity values from groups, but to generalize beyond binary update operations $U: D \times D \rightarrow D$ we must define these terms for the general case of transition graphs.

\begin{definition}
Let $\sys = (D,L,U)$ be a semiautomaton. A state $\mathsf{id} \in D$ is called an {\em identity state} if $\mathsf{id} \twoheadrightarrow s$ for every $s \in D$. A state $s \in D$ is called {\em invertible} if it can reach an identity state. 
\end{definition}

We call the following theorem "the inverse curse theorem".

\begin{theorem}
Let $\sys = (D,L,U)$ be a semiautomaton, and $Q: D \rightarrow R$ be a non-constant query. If every state of $D$ is invertible, then $Q$ has no free termination states in $\sys$.
\end{theorem}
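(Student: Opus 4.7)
The plan is to derive a contradiction by assuming there is a free termination state $s$ and then showing that $Q$ must be constant on all of $D$, contradicting the hypothesis that $Q$ is non-constant.

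First, I would unpack the two definitions in play. A free termination state $s$ has the property that $Q$ takes the same value at every state reachable from $s$. An identity state $\mathsf{id}$ is one that reaches every state of $D$. Invertibility of $s$ means $s$ reaches some identity state. Because reachability $\twoheadrightarrow$ is transitive (it is defined as the existence of a finite path in the transition graph), the composition of ``$s$ reaches $\mathsf{id}$'' and ``$\mathsf{id}$ reaches $t$'' gives ``$s$ reaches $t$'' for every $t \in D$.

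Putting this together: let $s$ be any putative free termination state. By hypothesis $s$ is invertible, so pick an identity state $\mathsf{id}$ with $s \twoheadrightarrow \mathsf{id}$. Then for any $t \in D$, the identity state reaches $t$, whence $s \twoheadrightarrow t$ by transitivity, i.e. $t \in U^{\infty}(s)$. The free termination property of $s$ then forces $Q(t) = Q(s)$ for every $t \in D$, making $Q$ constant. This contradicts the assumption that $Q$ is non-constant, so no such $s$ can exist.

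There is essentially no obstacle: the proof is a one-shot application of transitivity of $\twoheadrightarrow$ together with the two definitions. The only subtlety worth flagging is that the definition of identity state does not require $\mathsf{id}$ itself to be unique, nor does it require $\mathsf{id} \twoheadrightarrow \mathsf{id}$ via a nontrivial path; it only asserts reachability of every state, which is precisely what the argument consumes. The non-constancy hypothesis on $Q$ is exactly what rules out the degenerate case where the conclusion ``$Q$ has no free termination state'' would fail (a constant query trivially makes every state a free termination state).
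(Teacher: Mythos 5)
Your proof is correct and follows essentially the same route as the paper's: both use invertibility to get $s \twoheadrightarrow \mathsf{id}$, then the identity property and transitivity of $\twoheadrightarrow$ to reach a state where $Q$ differs, contradicting free termination. The only cosmetic difference is that the paper fixes a witness $s'$ with $Q(s') \neq Q(s)$ up front, whereas you first conclude $Q$ is constant and then contradict the hypothesis.
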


\begin{proof}
Suppose that some state $s \in D$ freely terminates. Since $Q$ is not constant, there exists some state $s' \in D$ such that $Q(s) \neq Q(s')$. Since $s$ is invertible, we have that $s  \twoheadrightarrow \mathsf{id} $, where $\mathsf{id}$ is an identity element. By the definition of an identity element, $\mathsf{id} \twoheadrightarrow s'$. Hence, $s \twoheadrightarrow s'$, which contradicts the fact that $s$ is a free termination state.
\end{proof}

\begin{example}
As an example, consider the case where $D = \mathbb{Z}$, and let $U(i,+) = i+1$ and $U(i,-) = i-1$, i.e., we have a counter that can be incremented or decremented. Then, any state in $D$ is an identity state and invertible, so any query $Q$ has no free termination states unless it is constant.
\end{example}

When $U: D \times D \rightarrow D$ and $(D,U)$ forms a group ($U$ has identity element and every element has an inverse), then we obtain the following corollary:

\begin{corollary}
Let $\sys = (D,L,U)$ be a semiautomaton such that $(D,U)$ forms a group. If $Q$ is not a constant query, then $Q$ has no free termination states.
\end{corollary}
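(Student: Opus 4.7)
The plan is to derive this as a direct consequence of the inverse curse theorem. That theorem already does the heavy lifting: it says that if every state of $D$ is invertible (in the semiautomaton sense, meaning it can reach an identity state, where an identity state is one that reaches every element of $D$), then no non-constant query has a free termination state. So the task reduces to verifying that the group hypothesis on $(D,U)$ implies each state of the semiautomaton is invertible in this sense.

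First I would unpack what $(D,U)$ being a group means in terms of the transition graph $G[\sys]$. Since $L = D$ and $U : D \times D \to D$, an edge $s \xrightarrow{\ell} s'$ corresponds exactly to $U(s,\ell) = s'$. Let $e \in D$ denote the group identity. I would then show that $e$ is an identity state of the semiautomaton, i.e. $e \twoheadrightarrow s$ for every $s \in D$. This is immediate in a single step: taking $\ell = s$ gives $U(e,s) = e \cdot s = s$, so $e \xrightarrow{s} s$. Thus every element of $D$ is reachable from $e$.

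Next I would show that every $s \in D$ can reach the identity state $e$, so every state is invertible. Let $s^{-1}$ denote the group inverse of $s$. Then $U(s, s^{-1}) = s \cdot s^{-1} = e$, so $s \xrightarrow{s^{-1}} e$ in one step. Hence every state of the semiautomaton is invertible.

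With both ingredients in place, the inverse curse theorem applies verbatim to conclude that the non-constant query $Q$ has no free termination states. The main (minor) subtlety is being careful that the group operation's identity and inverse axioms translate into the particular reachability notion $\twoheadrightarrow$ used by the theorem; there is no real obstacle here because both the identity axiom and the inverse axiom produce the needed transitions directly as single edges in $G[\sys]$, so the argument is essentially a one-line reduction once the correspondence is set up.
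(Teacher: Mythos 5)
Your proof is correct and takes the same approach the paper intends: the paper states this corollary as an immediate consequence of the inverse curse theorem, and your verification that the group identity is an identity state (via $U(e,s)=s$) and that every state reaches it (via $U(s,s^{-1})=e$) supplies exactly the details the paper leaves implicit.
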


This corollary tells us that an update function that forms a group precludes the possibility of free termination. It also tells us that in view maintenance, which often studies rings rather than semirings, free termination is impossible. This is consistent with the monotonicity lens on threshold queries from CRDTs~\cite{crdtOn} where an inverse corresponds to moving backward in the partial order, which prevents the possibility of threshold queries with free termination states. The practical benefits of having inverses in CRDTs to allow "undo" operations have been discussed in ~\cite{undoableCrdts} and ~\cite{latticeWrappers}.
This is an interesting dichotomy. Two parallel lines of work have shown the value of invertibility in data systems (DBSP~\cite{dbsp}, DBToaster~\cite{dbtoaster}) and the value of coordination-free monotone queries (CALM theorem, CRDTs), but the benefits of these properties appear mutually exclusive.

\section{Free Termination in Distributed Systems}
\label{sec:distributed-systems}

In this section, we will study the problem of distributively computing a query in a coordination-free manner via the lens of free termination. Both transducer networks~\cite{calmTheorem} and CRDTs offer coordination-free models of eventual convergence, i.e., the existence of a time (called the quiescence point) where the query result converges.  However,  without the ability of a machine to determine whether it has already reached such a quiescence point, the applicability of this notion in practical systems is limited~\cite{crdtOn}.  If a user wants to read the (complete) output of a transducer network, eventual consistency cannot provide a certain answer.

By definition, free termination can determine when a complete answer can be given with certainty and hence aligns with the need for distributed systems to promptly respond to user requests. However, this stronger guarantee warrants a more fine-grained definition of coordination-freeness. In particular, we will need to talk about coordination-freeness as a property of a query and its input. This is in contrast to the CALM Theorem~\cite{calmTheorem}, which talks about coordination-freeness as a property of an entire query. 

\subsection{Distributed Model}

Our task is to compute a query $Q$ over a relational instance $I$ defined over a schema $\mathbf{R}$ over a network $\mathcal{N}$. A network $\mathcal{N}$ is defined as a finite, connected, undirected graph over a set of vertices $V$. Initially, the instance $I$ is horizontally partitioned across the nodes in the network.

The distributed computational model we consider here is a simplified version of the relational transducer network model used by Ameloot et al. In particular, we leverage the equivalence between oblivious transducers and coordination-freeness proven in~\cite{calmTheorem} to focus on oblivious networks and specifically a type of communication protocol used by many constructions in ~\cite{calmTheorem} called \em network flooding \em. 
In this construction, nodes attempt to achieve eventual consistency by broadcasting all their local information to their neighbors and sending no other messages.
More precisely, the computation in each node will be captured by the semiautomaton $\sys_\cup$, where each state is an instance over $\mathbf{R}$ and each transition adds a new tuple in the instance. This roughly corresponds to an oblivious, inflationary, and monotone relational transducer, with the critical difference that there is no output generated -- only the state is modified. 
Nodes can communicate by sending a tuple from their local state to be added to the instance of any neighboring node.

For now, we will focus on Boolean queries. Formally, a {\em configuration} of the network is a triple of mappings $\gamma = (state, buf, ready)$, where $state$ maps each node in $V$ to a state in $\sys_\cup$, $buf$ maps each node in $V$ to a finite multiset of facts from $\mathbf{R}$ called a buffer, and $ready$ maps each node to $\{\mathsf{F}, \mathsf{T}\}$. Initially, each state contains only the tuples in $I$ from the local partition, the buffers are empty, and $ready(v) = \mathsf{F}$. There are three types of transitions between two configurations:
\begin{description}
    \item[Produce Transition:] A node can move any tuple in its local state to the buffers of its neighboring nodes;
    \item[Consume Transition:] A node can update its state by removing a tuple from its local buffer and adding it to its local instance.
    \item[Ready Transition:] A node can set $ready(v) \gets \mathsf{T}$.
\end{description}

Importantly, once $ready$ is set to true, it cannot be further modified. Setting $ready$ to true denotes that the query result will not change. An algorithm for $Q$ in this model determines two things: $(i)$ when to send each tuple to its neighbors, and $(ii)$ if and when to perform a ready transition.

A {\em run} $\rho$ is an infinite sequence of transitions starting from an initial configuration. A run is {\em fair} if every fact in every buffer is eventually taken out, and it is {\em complete} if every tuple in a state is eventually sent to its neighbors. Finally, we say that an algorithm computes the correct output of $Q$ if for all inputs $I$ and all horizontal partitionings of $I$, whenever $ready(v) = \mathsf{T}$, $Q(state(v)) = Q(I)$. 
In every infinite run, there is a natural number $n>0$ such that none of the states change after the $n$-th transition: we call this the {\em quiescence point}. In a fair and complete run, each local state eventually converges to $I$, and hence at the quiescence point $state(v) = I$ for every node $v \in V$. However, without coordination, it is not possible to know when a node has received the entire input. 

\begin{definition}[Coordination-Free Correctness]
We say that the pair $(Q,I)$ is {\em coordination-free correct} if there exists an algorithm such that in every fair and complete run the correct output of $Q$ is computed and $ready$ is set to true at all nodes.
\end{definition}

In other words, no matter how computation proceeds and how messages are exchanged, at some point, the algorithm will perform the ready transition and hence know (without coordination) that it has computed the correct result and thus can give it to the user.
The next proposition relates the above notion of fine-grained coordination-freeness to free termination.

\begin{theorem} \label{thm:coord:free}
$(Q,I)$ is coordination-free correct if and only if $I$ is a free termination state for $Q$ in the semiautomaton $\sys_\cup$.
\end{theorem}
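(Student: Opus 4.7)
My plan is to prove the two directions independently, both by leveraging that in $\sys_\cup$ the closure $U^\infty(J)$ of an instance $J$ is exactly the set of supersets of $J$, so a free termination state is characterized by $Q$ being constant on all supersets reachable by tuple insertions.

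For the $(\Leftarrow)$ direction I would exhibit an explicit algorithm witnessing coordination-free correctness. It uses the network-flooding protocol (each node repeatedly forwards every tuple of its local state to its neighbors and consumes all incoming buffered tuples), and at each node $v$ performs the ready transition exactly when $state(v)$ is itself a free termination state for $Q$ in $\sys_\cup$. Because the local state only grows by insertion and is always contained in the global input, two facts follow: (i) in any fair and complete run $state(v)$ eventually equals $I$, which by hypothesis is a free termination state, so ready is eventually set at every node; (ii) whenever ready is set on any input $J$, we have $state(v) \subseteq J$ and therefore $J \in U^\infty(state(v))$, so by the free-termination condition on $state(v)$ we obtain $Q(state(v)) = Q(J)$. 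Together these yield coordination-free correctness for $(Q,I)$.

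For the $(\Rightarrow)$ direction I would argue by contrapositive. Suppose $I$ is not a free termination state; then there exists $I' = I \cup I_0$ with $I_0 \neq \emptyset$ and $Q(I') \neq Q(I)$. Let $A$ be any algorithm that is coordination-free correct on $(Q,I)$; since ``computes the correct output of $Q$'' is a requirement over all inputs, $A$ must also give correct answers when run on $I'$. Fix a horizontal partition of $I$, take a fair and complete run $\rho$ of $A$ on it, and pick a node $v$ at which ready is first set in $\rho$, with local state $s_v \subseteq I$ and $Q(s_v) = Q(I)$. Now run $A$ on the partition of $I'$ obtained by placing the extra tuples $I_0$ at some node $u \neq v$ (using that the network has at least two nodes) and leaving the rest of the partition unchanged. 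I would mirror the transitions of $\rho$ step-by-step, never moving any tuple of $I_0$ out of $u$'s local state, so the local histories at every other node coincide with those in $\rho$; in particular $v$ performs the same ready transition at the same moment with the same local state $s_v$. Finally I would extend this prefix into a fair and complete run on $I'$ by flooding $I_0$ and emptying all remaining buffers. The resulting run has ready set at $v$ with $Q(state(v)) = Q(s_v) = Q(I) \neq Q(I')$, contradicting the correctness of $A$ on input $I'$.

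The hard part will be justifying that the simulated prefix in the $(\Rightarrow)$ direction is a legal behavior of $A$ on $I'$. This rests on two properties of the oblivious network-flooding model: the algorithm's local decisions (produce and ready transitions) depend only on a node's own local history, so replicating $v$'s view forces the replicated decision; and produce transitions at $u$ are at $u$'s discretion within the nondeterminism of the runs, so deferring $I_0$'s dissemination is permitted. Once these are in place, completing the prefix to a fair and complete run is immediate since fairness and completeness only require all buffered and state tuples to be processed eventually, which we can always arrange after $v$ has irrevocably performed its ready transition.
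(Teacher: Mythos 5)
Your proposal is correct and follows essentially the same route as the paper: the $(\Leftarrow)$ direction uses the identical algorithm (set $ready$ exactly when the local state is a free termination state, relying on fairness/completeness to reach $I$), and the $(\Rightarrow)$ direction is the same indistinguishability argument the paper uses --- a run on $I'\supseteq I$ in which some node first receives exactly the tuples of $I$ is locally indistinguishable from a run on $I$, forcing a premature $ready$ and hence $Q(I)=Q(I')$. Your version merely phrases the second direction contrapositively and is somewhat more explicit than the paper about the simulation details (deferring $I_0$ at another node, the locality of the ready decision, and the need for at least two nodes), all of which the paper leaves implicit.
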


\begin{proof} 
Suppose $I$ is a free termination state. Consider the following algorithm: it will set $ready(v)$ to true exactly when $state(v)$ is a free termination state for $Q$ in $\sys_\cup$. This algorithm is correct, since at the instance $I'$ where $ready$ becomes true, all reachable states maintain the result of $Q$. Since $I$ is reachable from $I'$ (from the network flooding construction), $Q(I) = Q(I')$. Further, every fair and complete run reaches the quiescence point, when $state(v) = I$, and thus the algorithm will set $ready$ to true.

In the other direction, suppose $(Q,I)$ is coordination-free correct and consider an algorithm that computes the output correctly. At the quiescence point, when $state(v) = I$, this algorithm must set $ready$ to true (since the state will remain unchanged from that point on). But now consider a fair and complete run for another input $I' \supseteq I$ (which is a reachable state in $\sys_\cup$) such that some node $v$ receives first all of $I$. At this point, the algorithm would need to do the ready transition. But because of correctness, it must be that $Q(I) = Q(I')$. Hence, $I$ is a free termination state.
\end{proof}

If the input is not a free termination state, the system will often need to perform some coordination to get the user a concrete answer. We thus avoid coordination for a given query \textit{on some inputs, but not all inputs!}  In practice, the distribution of inputs to a given system is what determines how helpful free termination is. 

We next discuss how we could define a notion of coordination-free correctness for the entire query $Q$. As a first attempt, we could define $Q$ to be coordination-free correct if $(Q,I)$ is coordination-free correct for every input $I$. From Theorem~\ref{thm:coord:free}, this is equivalent to saying that every input is free terminating, which happens only when $Q$ is a constant query. Hence, we need to slightly relax this notion, by requiring that $(Q,I)$ is coordination-free correct for some inputs.

\begin{definition}
 We say that a Boolean query $Q$ is {\em positively (resp. negatively) coordination-free} if $(Q,I)$ is coordination-free correct for every input $I$ such that $Q(I) = \mathsf{T}$ (resp. $Q(I) = \mathsf{F}$).   
\end{definition}

The notion of positive coordination-freeness is exactly the notion of query coordination-freeness used for transducer networks in~\cite{calmTheorem}. Indeed, a transducer can only write true in its output tape, so if nothing is written, we assume a false output. 

\begin{theorem}
A Boolean query $Q$ is positively (resp. negatively) coordination-free if and only if $Q$ is monotone (resp. antitone).
\end{theorem}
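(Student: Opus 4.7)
The plan is to reduce the statement to a condition about free-termination states in $\sys_\cup$ via Theorem~\ref{thm:coord:free}, and then translate that condition into (anti)monotonicity using the fact that in $\sys_\cup$ the reachability relation $\twoheadrightarrow$ is exactly set containment $\subseteq$.

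By definition, $Q$ is positively coordination-free iff $(Q,I)$ is coordination-free correct for every $I$ with $Q(I) = \mathsf{T}$. By Theorem~\ref{thm:coord:free}, this is equivalent to: every $I$ with $Q(I) = \mathsf{T}$ is a free-termination state in $\sys_\cup$. Because $U$ in $\sys_\cup$ is pure tuple insertion, the closure $U^\infty(I)$ is exactly $\{I' : I \subseteq I'\}$, so the free-termination condition unfolds to: whenever $I \subseteq I'$ and $Q(I) = \mathsf{T}$, also $Q(I') = \mathsf{T}$.

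For the forward direction I would assume $Q$ is monotone with respect to $\subseteq$ on instances and the order $\mathsf{F} \sqsubseteq \mathsf{T}$ on $\mathbb{B}$. Then if $Q(I) = \mathsf{T}$ and $I \subseteq I'$, monotonicity forces $\mathsf{T} = Q(I) \sqsubseteq Q(I')$, i.e. $Q(I') = \mathsf{T}$. So every such $I$ is a free-termination state, and positive coordination-freeness follows. Conversely, suppose $Q$ is positively coordination-free; to show monotonicity pick any $I \subseteq I'$. If $Q(I) = \mathsf{F}$ then $Q(I) \sqsubseteq Q(I')$ holds trivially. If $Q(I) = \mathsf{T}$ then $I$ is a free-termination state by the unfolded condition above, whence $Q(I') = Q(I) = \mathsf{T}$, giving $Q(I) \sqsubseteq Q(I')$ again.

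The antitone direction is completely symmetric: negative coordination-freeness is equivalent (via Theorem~\ref{thm:coord:free}) to the statement that every $I$ with $Q(I)=\mathsf{F}$ is a free-termination state, which unfolds to: $I \subseteq I'$ and $Q(I) = \mathsf{F}$ imply $Q(I') = \mathsf{F}$; and this is exactly antitonicity of $Q$ w.r.t.\ $\subseteq$ and $\mathsf{F} \sqsubseteq \mathsf{T}$, since the only non-trivial case of $Q(I) \sqsupseteq Q(I')$ is when $Q(I') = \mathsf{F}$... equivalently, $Q(I) = \mathsf{F}$ forces $Q(I') = \mathsf{F}$. There is no genuine obstacle here: everything is bookkeeping once Theorem~\ref{thm:coord:free} is in hand and once one recognizes that reachability in $\sys_\cup$ is superset inclusion. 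The one point worth being careful about is that monotonicity is a constraint only in the $Q(I)=\mathsf{T}$ direction (and antitonicity only in the $Q(I)=\mathsf{F}$ direction), which matches exactly the ``positive'' versus ``negative'' asymmetry baked into how transducers can only write $\mathsf{T}$ to the output tape.
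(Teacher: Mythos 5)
Your proposal is correct and follows essentially the same route as the paper: reduce via Theorem~\ref{thm:coord:free} to the condition that every true (resp.\ false) instance is a free-termination state of $\sys_\cup$, then identify reachability in $\sys_\cup$ with superset containment to read off (anti)monotonicity. You spell out both implications a bit more explicitly than the paper does, but there is no substantive difference in the argument.
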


\begin{proof}
From Theorem~\ref{thm:coord:free}, $Q$ is positively coordination-free if and only if every state $I$ with $Q(I) = \mathsf{T}$ is a free termination state. Say $I \subseteq I'$ and suppose $Q(I)$ is true. Then, $I$ is a free termination state and thus $I'$ is as well, which implies $Q(I') = Q(I) = \mathsf{T}$. Hence, $Q$ is monotone. The antitone case is symmetric.
\end{proof}

For instance, consider the Boolean query $\forall x: R(x) > 0$. This query freely terminates on all false instances, and hence it can be considered negatively coordination-free. Observe that Ameloot et al.~\cite{calmTheorem} categorizes antitone Boolean queries as not being computable by oblivious transducer networks. This is because of the definition of the output of a transducer network being the union of outputs and the encoding of the boolean values True and False being the presence of an empty tuple and the absence of a tuple respectively. In addition to antitone queries, some queries are neither monotone nor antitone, but still have coordination-free correct inputs such as our example from Section ~\ref{sec:partial-orders}, $Q() = R(c) \wedge \neg S(c)$.

\introparagraph{Non-Boolean Queries} Consider now a non-Boolean query $Q$ that outputs a set. In this case, we will introduce a ready variable $ready(v,t)$ for every node $v$ and every potential tuple $t$. Note that this introduces another layer of granularity since we can now compute some tuples in a coordination-free correct manner, while others we cannot. We can similarly lift coordination-free correctness to the entire query $Q$ by saying that $Q$ is positively (resp. negatively) coordination-free if every Boolean query $(t \in Q(I))$ is positively (resp. negatively) coordination-free. Positive coordination-freeness captures exactly how oblivious relational transducers work: they can write only correct facts to the output tape, which they can never retract. The following result is immediate.

\begin{theorem}
A non-Boolean query $Q$ is positively (resp. negatively) coordination-free if and only if $Q$ is monotone (resp. antitone).
\end{theorem}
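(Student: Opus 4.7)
The plan is to reduce the statement to the Boolean case already proven in the preceding theorem. The key observation is that the definition of positive (resp. negative) coordination-freeness for a non-Boolean $Q$ was set up precisely as the conjunction, over all potential tuples $t$, of positive (resp. negative) coordination-freeness of the Boolean ``slice'' query $B_t(I) := (t \in Q(I))$, so almost all the work is already done.

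Concretely, I would first unpack the definition: $Q$ is positively coordination-free iff for every potential output tuple $t$, the Boolean query $B_t$ is positively coordination-free. Applying the previous theorem to each $B_t$, this is equivalent to saying that every $B_t$ is monotone, i.e. for every $t$ and every pair $I \subseteq I'$, if $t \in Q(I)$ then $t \in Q(I')$. Next I would observe that this condition is logically equivalent to the standard notion of monotonicity for the set-valued query $Q$: $I \subseteq I' \implies Q(I) \subseteq Q(I')$. The forward direction is immediate by picking each $t \in Q(I)$ in turn and using the tuple-wise statement; the reverse direction is also immediate since $Q(I) \subseteq Q(I')$ just means every $t \in Q(I)$ lies in $Q(I')$. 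Chaining these equivalences gives the result for the positive case.

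For the negative direction, I would run the symmetric argument: $Q$ is negatively coordination-free iff every $B_t$ is negatively coordination-free, iff by the Boolean theorem every $B_t$ is antitone, iff whenever $I \subseteq I'$ and $t \in Q(I')$ we have $t \in Q(I)$, iff $Q(I') \subseteq Q(I)$ whenever $I \subseteq I'$, which is antitonicity of $Q$ under the subset order on the output.

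There is no real obstacle here; the only thing to be careful about is that the ``ready'' variable is indexed per tuple, so coordination-freeness of a non-Boolean query really does decompose cleanly into independent coordination-freeness statements about each $B_t$, and that the subset order on outputs is exactly the pointwise ordering induced by tuple membership. Once this is spelled out, the theorem follows directly from Theorem~\ref{thm:coord:free} and the Boolean coordination-freeness theorem without any further machinery.
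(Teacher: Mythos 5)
Your proposal is correct and follows exactly the route the paper intends: the paper states the result is ``immediate'' precisely because its definition of coordination-freeness for non-Boolean $Q$ is the per-tuple decomposition into the Boolean queries $t \in Q(I)$, to which the preceding Boolean theorem applies, and tuple-wise monotonicity of all these slices is the same as set-containment monotonicity of $Q$. You have simply written out the steps the paper leaves implicit.
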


We complete this section by an example that shows the benefits of a fine-grained coordination-free definition. Consider the following query $Q$, which is neither monotone nor antitone: $Q(x) = (R(x) \wedge x > 10) \vee (S(x) \wedge \neg T(x))$. In this case, any instance that contains a tuple $R(20)$ freely terminates (with true) for the output tuple $(20)$. Also, any instance that contains $(T(5))$ freely terminates (with false) for the output tuple $(5)$.  Thus, we can correctly output the existence/non-existence of the two tuples in a coordination-free manner.

\subsection{Distributed Computation with Metadata}

Many distributed systems depend on some form of metadata to infer facts about what state transitions may be possible in the future. The essential such example from ~\cite{calmTheorem} is the \textsf{all()} relation which returns the IDs of all transducers in a network. \textsf{all()} is used to compute non-monotone queries and can also be modeled with free termination.

To encode the \textsf{all()} relation, we extend the schema $\mathbf{R}$ to include a nullary relation $All()$ that will be set to true if we know that all machines have sent all their local data. Formally, we will extend $\sys_\cup$ such that the update transitions of any state with $All = \mathsf{F}$ will be as before, but all outgoing transitions of a state with $All = \mathsf{T}$ will be self-loops. This makes every state with $All = \mathsf{T}$ be a free termination state. Since in every fair and complete run a node will receive all input data, we are always guaranteed that we will reach a free termination state and hence can do the ready transition correctly. Of course, the tradeoff is that updating $All$ requires coordination between the nodes.

In Appendix ~\ref{sec:appendix-metadata} we show how partitioning policy metadata in transducer networks~\cite{zinn-win-move, weakerMonotonicity} can be modeled with free termination as well.

\section{Free Termination with Finite States}
\label{sec:finite}

For practical applications, we are interested in computing free termination states both statically at compile time and dynamically at runtime. Thus far, we have identified algebraic properties of programs that allow us to detect free termination states. In this section, we study free termination when the semiautomaton has a finite state space.

\subsection{Detecting Free Termination}

At runtime, it is desirable to determine whether the current state is a free termination state. This can be done by computing the reachable states from the current state in the transition graph $G[\sys]$ and verifying that they all return the same query result. Hence, this computation is linear to the size of the transition graph. However, it turns out that we can compute all free terminating states in linear time as well.

\begin{proposition} \label{prop:free:linear}
Let $\sys=(D,L,U)$ be a semiautomaton with a finite state space, and $Q$ be a query. We can determine all free terminating states in time linear to the size of the transition graph (assuming computing $Q$ takes constant time). 
\end{proposition}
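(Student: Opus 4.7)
The plan is to reduce the problem to a computation over the condensation DAG of the transition graph $G[\sys]$, and then process that DAG in reverse topological order. The key observation is that all states inside a single strongly connected component (SCC) reach each other, so they all have identical closures $U^\infty(\cdot)$; hence either every state in an SCC is a free termination state or none is, and this happens exactly when the common closure maps to a single value under $Q$.

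Concretely, I would proceed as follows. First, compute the SCCs of $G[\sys]$ using Tarjan's algorithm in time $O(|D|+|E|)$, where $E$ is the edge set of $G[\sys]$. Second, mark each SCC $C$ as \emph{uniform} with value $q(C)$ if $Q$ is constant on $C$; this takes $\sum_C |C| = O(|D|)$ time since computing $Q$ is assumed constant-time. Third, build the condensation DAG $G'$ in the usual linear pass over the edges. Finally, traverse $G'$ in reverse topological order and compute, for each SCC $C$, a flag $\textit{ft}(C)$ defined by
\[
\textit{ft}(C) \;=\; C \text{ is uniform} \;\wedge\; \bigwedge_{C'\text{ child of }C \text{ in } G'} \bigl(\textit{ft}(C') \wedge q(C') = q(C)\bigr).
\]
Then output every state lying in some SCC with $\textit{ft}(C) = \mathsf{T}$.

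For correctness, I would argue both directions. If $\textit{ft}(C) = \mathsf{T}$ then every SCC reachable from $C$ in $G'$ is uniform with the same value $q(C)$, so every state reachable from any $s \in C$ has $Q$-value equal to $Q(s)$, making $s$ a free termination state. Conversely, if some $s \in C$ is a free termination state, then all states in $U^\infty(s)$ share the value $Q(s)$; since every SCC $C'$ reachable from $C$ is contained in $U^\infty(s)$, each such $C'$ is uniform with value $Q(s)$, and the recursive condition propagates up to give $\textit{ft}(C) = \mathsf{T}$. A short induction on the reverse topological order makes the propagation rigorous.

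For the runtime, the SCC computation, uniformity check, and condensation are each linear. The final bottom-up pass visits each SCC once and, at each SCC, examines each outgoing condensation edge exactly once; since the condensation has at most $|E|$ edges and $|D|$ vertices, this pass is also $O(|D|+|E|)$, giving the claimed linear bound. The only subtlety I expect is being careful that self-loops and parallel edges in $G[\sys]$ do not inflate the analysis beyond the size of the transition graph; since uniformity and the recursive flag depend only on vertex-level information and on the set (not multiset) of condensation children, handling them correctly is routine and does not affect the linear bound.
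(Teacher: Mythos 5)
Your proposal is correct and follows essentially the same route as the paper: reduce to the condensation DAG via the observation that free termination is an SCC-level property requiring $Q$ to be constant on each reachable SCC, then propagate a flag in reverse topological order. The only cosmetic difference is that the paper first deletes non-uniform SCCs together with everything that can reach them, whereas you fold that case into the recursive flag definition; the two are equivalent and both run in linear time.
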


\begin{proof}
To achieve this result, we first need the following observation on the behavior of free terminating states: if $s$ is a free terminating state, then all states in the Strongly Connected Component (SCC) of $s$ in $G[\sys]$ are also free terminating. Thus if two states $s,s'$ are in the same SCC and $Q(s) \neq Q(s')$, then none of the states in the SCC are free terminating. 

The first step of the algorithm is to convert $G[\sys]$ to a Directed Acyclic Graph (DAG) $G'$, where each node represents an SCC. This is standard and can be done in linear time. As a second step, we iterate over all SCCs and label them as a candidate if $Q$ is the same across all states in that SCC; otherwise, we remove from $G'$ the SCC and all other nodes that can reach it. This step can also be implemented in linear time.

We are now left with a DAG $G''$ where each SCC has the same value for $Q$. In our final step, we perform a traversal of the SCCs in reverse topological order. Any SCC with no outgoing edge in $G''$ is free terminating (meaning all the states in the SCC are free terminating). At any step, if an SCC $C$ has outgoing edges only to free terminating SCCs and $Q$ is the same for $C$ and its outgoing SCCs, we mark $C$ as free terminating; otherwise, it is not free terminating. This final step also requires linear time.
\end{proof}

Recall that a finite semiautomaton $\sys$ is a DFA without a start or accept states. If we denote a state of $\sys$ to be the start state, and take $Q$ to be a Boolean query that returns true for accept states and false for the other states, we have exactly a DFA. In this case, free termination of a state means essentially that we can stop the computation of the DFA without the need to read any more symbols from the input. From Proposition~\ref{prop:free:linear}, we obtain as a corollary:

\begin{corollary}The free termination states of a deterministic finite automata (DFA) can be computed in linear time in the size of the DFA.
\end{corollary}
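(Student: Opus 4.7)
The plan is to derive this corollary as an almost immediate consequence of Proposition~\ref{prop:free:linear}, since a DFA is essentially a finite semiautomaton together with a designated start state and a set of accepting states, and the structural content of ``free termination'' depends only on the state space, the transition function, and the query.

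First, I would set up the reduction. Given a DFA $M = (D, \Sigma, \delta, s_0, F)$, I would construct a semiautomaton $\sys = (D, \Sigma, \delta)$ by simply forgetting the start state and the set $F$ of accept states. The transition graph $G[\sys]$ is exactly the underlying labeled directed graph of $M$, so its size is linear in the size of the DFA. Next, I would define a Boolean query $Q: D \to \{\mathsf{T}, \mathsf{F}\}$ by $Q(s) = \mathsf{T}$ if $s \in F$ and $Q(s) = \mathsf{F}$ otherwise. This query is clearly computable in $O(1)$ per state given a membership oracle for $F$, which is free after a linear-time preprocessing pass to mark accepting states.

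Then I would invoke Proposition~\ref{prop:free:linear} directly: under exactly these hypotheses (finite state space, query computable in constant time), the free termination states of $\sys$ with respect to $Q$ can be computed in time linear in the size of $G[\sys]$, which equals $O(|D| + |\delta|)$, i.e., linear in the size of the DFA. It remains only to justify that a free termination state of $(\sys, Q)$ corresponds to what we want for the DFA: a state from which no sequence of input symbols can change acceptance status. But this is immediate from the definition of free termination, since $U^\infty(s)$ is precisely the set of states reachable from $s$ by reading some string in $\Sigma^*$, and $Q$ records acceptance.

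There is no real obstacle here; the only subtlety is to be explicit that the DFA's alphabet and start state are irrelevant to whether a given state is free terminating, so the reduction genuinely preserves the notion. I would note in passing that this gives an interesting operational reading: a DFA can ``halt early'' upon entering a free termination state without reading the rest of its input, and the set of such shortcut states can be precomputed in linear time.
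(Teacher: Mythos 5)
Your reduction is exactly the one the paper uses: it views the DFA as a finite semiautomaton by discarding the start state, takes $Q$ to be the Boolean query marking accept states, and then applies Proposition~\ref{prop:free:linear} directly. The proposal is correct and matches the paper's argument, including the observation that free termination here means the DFA can stop reading input early.
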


DFAs are an interesting example of converting an infinite state space (all strings over the alphabet) to a finite state space. We now explore connections between free termination and this notion of equivalent state representations.

\subsection{State Minimization}

\label{sec:state-minimization}

In this section, we ask whether given a semiautomaton $\sys$ and a query $Q$ we can construct another "simpler" semiautomaton $\sys'$ that has the same behavior as $\sys$ for the given query. Formally:

\begin{definition}[Equivalence]
 Let $\sys, \sys'$ be two semiautomata that both contain a start state $\bot$ that reaches all states and have the same label set $L$. Let $Q,Q'$ be two queries on $\sys, \sys'$ respectively.  We say that $(\sys,Q)$ is {\em equivalent} to $(\sys',Q')$ if given the same sequence of transitions starting from $\bot$, the query result will be identical.
\end{definition}

If $(\sys, Q)$ corresponds to a DFA, the above definition captures exactly DFA equivalence. In this case, one can simply perform state minimization in DFAs~\cite{sipserTextbook} to obtain a minimal DFA. The following property holds for free termination states in DFAs.

\begin{proposition}
A free termination state cannot be part of a cycle in a minimal DFA.
\end{proposition}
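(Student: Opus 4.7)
The plan is to argue by contradiction using the Myhill--Nerode-style characterization underlying DFA minimization: two states are merged in a minimal DFA iff they are indistinguishable, and I will show that any two distinct states in a common (non-self-loop) cycle containing a free termination state are necessarily indistinguishable.

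First I would suppose for contradiction that a free termination state $s$ lies on a cycle of length at least two, so there exists a distinct state $s' \neq s$ with $s \twoheadrightarrow s'$ and $s' \twoheadrightarrow s$, i.e., $s$ and $s'$ lie in the same strongly connected component of $G[\sys]$. The next step is to invoke the observation already used in the proof of Proposition~\ref{prop:free:linear}: every state in the SCC of a free termination state is itself a free termination state, and they all carry the same query value. In particular, $s'$ is a free termination state and $Q(s) = Q(s')$.

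Now I would exploit the definition of free termination in the DFA setting. Because the update function is total, for any input string $w \in L^*$ the states $\delta^*(s, w)$ and $\delta^*(s', w)$ are both well-defined and both belong to $U^\infty(s)$ and $U^\infty(s')$ respectively. Since $s$ and $s'$ are free termination states, we get
\begin{equation*}
Q(\delta^*(s, w)) = Q(s) = Q(s') = Q(\delta^*(s', w)).
\end{equation*}
Interpreting $Q$ as the accept/reject indicator of the DFA, this says $\delta^*(s, w)$ is accepting if and only if $\delta^*(s', w)$ is, for every string $w$. Hence $s$ and $s'$ are Myhill--Nerode equivalent.

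The final step is to observe that in a minimal DFA no two distinct states are equivalent, so $s = s'$, contradicting the choice of the cycle. I expect the only real subtlety is clarifying the convention that self-loops are not counted as cycles (consistent with the paper's definition of acyclic $U$) — a sink state with self-loops is a perfectly good free termination state in a minimal DFA, and the argument above does not rule this out since it crucially uses $s' \neq s$.
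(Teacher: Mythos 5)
Your proof is correct, and it takes a slightly different route to the contradiction than the paper does. Both arguments begin the same way: assume a free termination state $s$ lies on a (non-self-loop) cycle, observe that every state on the cycle is reachable from $s$ and hence is itself a free termination state with the same query value, and conclude that the DFA cannot be minimal. The paper finishes by explicitly constructing a smaller equivalent automaton: it collapses the cycle states into a single state. You instead show that any two distinct cycle states $s, s'$ are Myhill--Nerode indistinguishable --- since $\delta^*(s,w) \in U^\infty(s)$ and $\delta^*(s',w) \in U^\infty(s')$ for every $w$, free termination forces $Q(\delta^*(s,w)) = Q(s) = Q(s') = Q(\delta^*(s',w))$ --- and then invoke the standard fact that a minimal DFA has no two distinct indistinguishable (reachable) states. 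Your version is arguably the more watertight of the two: collapsing \emph{only} the cycle states, as the paper's proof literally describes, can produce conflicting outgoing transitions on the same symbol (different cycle states may map a given label to different targets outside the cycle), so the collapsed object is not obviously a well-defined equivalent DFA without further argument; the indistinguishability route sidesteps this entirely, at the mild cost of assuming the standard convention that minimal DFAs contain only reachable states. Your closing remark about self-loops is also consistent with the paper's convention that cycles exclude self-loops, and correctly identifies why sink-like free termination states survive in a minimal DFA.
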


\begin{proof}
Assume for the sake of contradiction that we have a minimal DFA and a free termination state $s$ that participates in a cycle in that DFA. Each state in the cycle is reachable from $s$, so they must all return the same query result as $s$ (accept/reject) and also be free-termination states. The DFA in which each state in the cycle is collapsed into a single state is equivalent and has strictly fewer states, thus the original DFA cannot have been minimal.    
\end{proof}

For a general pair $(\sys,Q)$, we can follow the same idea as state minimization in DFAs. We define the \textit{collapsing} of a set of states $S \subseteq D$ that take the same value for $Q$ as the modification of the semiautomaton that $(i)$ replaces $D$ with $(D \setminus S) \cup \{s_0\}$, where $s_0$ is a new state, and $(ii)$ any update that transitions to a state in $S$ goes to $s_0$, and any update that transitions from a state in $S$ starts in $s_0$. We also set $Q(s_0)$ to be the value of a state in $S$.  

\begin{proposition} \label{prop:collapse}
Let $\sys = (D,L,U)$ be a semiautomaton and $Q$ be a query. Let $s \in D$ be a free terminating state. Then, the pair $(\sys',Q')$ resulting from collapsing $U^\infty(s)$ is equivalent to $(\sys,Q)$.
\end{proposition}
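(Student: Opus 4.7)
The plan is to define a natural collapsing map $\phi: D \to D'$ and show that it is a transition-preserving, query-preserving surjection, so that every trace from $\bot$ in $\sys$ is shadowed by a parallel trace from $\phi(\bot)$ in $\sys'$ on which $Q$ and $Q'$ agree at every step.

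The key preliminary observation is that $S := U^\infty(s)$ is \emph{forward-closed} under $U$: if $x \in S$ and $U(x,\ell) = y$, then $y$ is reachable from $s$ via $x$, so $y \in S$. Moreover, because $s$ is a free termination state, $Q$ is constant on $S$ with value $Q(s)$. Together, these two facts ensure the collapsing is well-defined: all states in $S$ agree on $Q$, so we may set $Q'(s_0) := Q(s)$, and no transition ever leaves $S$, so the collapsed update function $U'$ is unambiguous. On $D \setminus S$, the collapsing leaves transitions and query values untouched, and I would set $Q'(x) := Q(x)$ there.

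Define $\phi(x) = s_0$ if $x \in S$, else $\phi(x) = x$. I would verify the homomorphism identity $U'(\phi(x),\ell) = \phi(U(x,\ell))$ by cases on whether $x \in S$: if $x \notin S$ and $U(x,\ell) \notin S$, both sides equal $U(x,\ell)$; if $x \notin S$ and $U(x,\ell) \in S$, both sides equal $s_0$ by the collapsing rule; if $x \in S$, then by forward closure $U(x,\ell) \in S$, so the right side is $s_0$, and by the collapsing rule the left side is also $s_0$. A straightforward induction on the trace length then shows that for any label sequence $\ell_1,\ldots,\ell_n$, if the trace from $\bot$ in $\sys$ reaches $s_n$, then the trace from $\phi(\bot)$ in $\sys'$ driven by the same labels reaches $\phi(s_n)$ (and $\phi(\bot)$ still reaches every state of $\sys'$ since $\phi$ is surjective and $\bot$ did so in $\sys$).

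It remains to show $Q(s_n) = Q'(\phi(s_n))$. If $s_n \notin S$, then $\phi(s_n) = s_n$ and $Q'(s_n) = Q(s_n)$ by construction. If $s_n \in S$, then $\phi(s_n) = s_0$ and $Q'(s_0) = Q(s) = Q(s_n)$ using constancy of $Q$ on $S$. I do not expect a significant obstacle; the content of the proposition is really packed into the two facts that free termination makes $Q$ constant on $U^\infty(s)$ and that $U^\infty(s)$ is closed under $U$, both of which follow immediately from their definitions, with the one thing to be careful about being the vacuous case in the homomorphism check, which is precisely what forward-closure rules out.
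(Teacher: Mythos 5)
Your proof is correct and follows essentially the same route as the paper's: both rely on $Q$ being constant on $U^\infty(s)$ (from free termination) and on $U^\infty(s)$ being closed under transitions, so that the collapsed state has only self-loops and every trace in $\sys$ is mirrored by a query-preserving trace in $\sys'$. The only difference is one of rigor: you make explicit, via the homomorphism identity for $\phi$ and the forward-closure check, the well-definedness of the collapsed update function, which the paper's shorter argument ("all transitions being self-loops") leaves implicit.
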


\begin{proof}
Because $s$ is a free termination state, we know every state in $U^\infty(s)$ returns the same query result as $Q(s)$. We can collapse all states in $U^\infty(s)$ with all transitions being self-loops. The collapsed state $s_0$ is a free termination state and cannot be part of a cycle. Any sequence of updates that ends up in $U^\infty(s)$ in $\sys$ will end up in $s_0$ in $\sys$, and hence we have the desired equivalence.
\end{proof}

Given $(\sys, Q)$ with finite state space, consider now the equivalent pair $(\sys_\square,Q_\square)$ obtained by repeatedly applying the above proposition to free termination states until there is no more change; we call $(\sys_\square,Q_\square)$ collapsed. This is analogous to a minimal state representation. Note that we need finiteness to guarantee that the process of collapsing states will terminate at some point. We can now show the following characterization of free termination in this case.

\begin{proposition}
Let $\sys$ be a semiautomaton with an initial state $\bot$ and finite state space, and $Q$ be a query. Then, for the equivalent collapsed pair $(\sys_\square,Q_\square)$, a state is a free termination state if and only if all outgoing transitions are self-loops (i.e. the state is a fixpoint).
\end{proposition}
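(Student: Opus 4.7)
The plan is to prove the biconditional in the two natural directions. The ``if'' direction is essentially by definition, and the ``only if'' direction proceeds by contradiction using the maximality of the collapsing procedure together with Proposition~\ref{prop:collapse}.

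For the ``if'' direction, suppose $s$ is a fixpoint of $\sys_\square$, i.e., every outgoing transition from $s$ is a self-loop. Since $U$ is total on $D \times L$, this means $U(s,\ell) = s$ for every $\ell \in L$, so $U^\infty(s) = \{s\}$. Then for every $s' \in U^\infty(s)$ we trivially have $Q_\square(s') = Q_\square(s)$, so $s$ is a free termination state by definition.

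For the ``only if'' direction, suppose $s$ is a free termination state in $(\sys_\square, Q_\square)$ and argue by contradiction: assume $s$ is not a fixpoint, so there exists $\ell \in L$ with $U(s,\ell) = s'$ for some $s' \neq s$. Then $|U^\infty(s)| \geq 2$. Since $s$ is a free termination state, $Q_\square$ takes the same value on every element of $U^\infty(s)$, so the hypotheses of Proposition~\ref{prop:collapse} are satisfied. Applying that proposition, we may collapse $U^\infty(s)$ into a single new state $s_0$ to obtain an equivalent pair $(\sys', Q')$ whose state space has size $|D_\square| - |U^\infty(s)| + 1 < |D_\square|$. This contradicts the fact that $(\sys_\square, Q_\square)$ was chosen as the result of iterating Proposition~\ref{prop:collapse} until no further change occurs. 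Hence $U^\infty(s) = \{s\}$, i.e., $s$ is a fixpoint.

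The main obstacle is making ``collapsed'' rigorous: we need to know that the iterative collapsing procedure actually terminates and that its endpoint admits no further nontrivial applications of Proposition~\ref{prop:collapse}. Both follow from the finiteness hypothesis on $D$, since each nontrivial collapse strictly decreases the number of states, so the process must halt, and at the halting point any further collapse (which would require some $U^\infty(s)$ of size at least two to be $Q$-constant) is ruled out exactly by the impossibility of further strict reduction. This is precisely why the proposition requires finiteness, and it is the only subtlety in the argument.
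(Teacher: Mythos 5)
Your proof is correct and follows essentially the same route as the paper's: the ``if'' direction is immediate from $U^\infty(s)=\{s\}$, and the ``only if'' direction invokes Proposition~\ref{prop:collapse} together with the fact that the collapsed pair admits no further nontrivial collapse. Your version is somewhat more explicit than the paper's (phrasing the maximality argument as a strict decrease in $|D|$ and noting that finiteness guarantees termination of the collapsing procedure), but the underlying argument is the same.
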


\begin{proof}
 From Proposition \ref{prop:collapse}, we know that a free termination state that has any reachable states other than itself can be collapsed into a single state in which there are no outgoing edges except self-loops. The other direction is also clear as a state $s$ that has no outgoing transitions satisfies the definition of a free termination state as $U^{\infty}(s) = \{s\}$.   
\end{proof}

State minimization offers an interesting perspective on distributed systems techniques like CRDTs. While CRDTs appear to only describe inflationary state mutations, it is common to convert non-inflationary updates into inflationary ones using metadata. We see this in both the two-phase set CRDT and the positive-negative counter CRDT (appendix ~\ref{sec:crdt-examples}). While the state transition graphs for these CRDTs are acyclic, they have equivalent representations that are cyclic. By looking at these structures from the perspective of user-visible state (query-layer equivalence), the separation between free termination of query results and eventual consistency of state becomes clear.

\section{Conclusion}
We have presented a general state transition framework for reasoning about coordination-free computation in distributed systems. Our central notion, \textit{free termination}, allows the relational transducer approach for declarative networking and the algebraic lattices approach of conflict-free replicated data types to be modeled in a single framework.

\section{Acknowledgments}
We thank Laura Power, Val Tannen, and Dan Suciu for their insightful feedback on this work. This work was made possible by the Simons Institute 2023 session on Logic and Algorithms in Database Theory and AI. This work is supported in part by gifts from Accenture, AMD, Anyscale, Cisco, Google, IBM, Intel, Intesa Sanpaolo, Lambda, Microsoft, NVIDIA, Samsung SDS, SAP, and VMware.

\bibliographystyle{plainurl}
\bibliography{bibliography}

\begin{thebibliography}{10}

\bibitem{relationalTransducers}
Serge Abiteboul, Victor Vianu, Brad Fordham, and Yelena Yesha.
\newblock Relational transducers for electronic commerce.
\newblock In {\em Proceedings of the seventeenth ACM SIGACT-SIGMOD-SIGART symposium on Principles of database systems}, pages 179--187, 1998.

\bibitem{datalogo}
Mahmoud Abo~Khamis, Hung~Q Ngo, Reinhard Pichler, Dan Suciu, and Yisu~Remy Wang.
\newblock Convergence of datalog over (pre-) semirings.
\newblock In {\em Proceedings of the 41st ACM SIGMOD-SIGACT-SIGAI Symposium on Principles of Database Systems}, pages 105--117, 2022.

\bibitem{consistency-without-borders}
Peter Alvaro, Peter Bailis, Neil Conway, and Joseph~M Hellerstein.
\newblock Consistency without borders.
\newblock In {\em Proceedings of the 4th annual Symposium on Cloud Computing}, pages 1--10, 2013.

\bibitem{bloom}
Peter Alvaro, Neil Conway, Joseph~M Hellerstein, and William~R Marczak.
\newblock Consistency analysis in bloom: a calm and collected approach.
\newblock In {\em CIDR}, pages 249--260. Citeseer, 2011.

\bibitem{weakerMonotonicity}
Tom~J Ameloot, Bas Ketsman, Frank Neven, and Daniel Zinn.
\newblock Weaker forms of monotonicity for declarative networking: A more fine-grained answer to the calm-conjecture.
\newblock {\em ACM Transactions on Database Systems (TODS)}, 40(4):1--45, 2015.

\bibitem{calmTheorem}
Tom~J Ameloot, Frank Neven, and Jan Van~den Bussche.
\newblock Relational transducers for declarative networking.
\newblock {\em Journal of the ACM (JACM)}, 60(2):1--38, 2013.

\bibitem{datafun}
Michael Arntzenius and Neelakantan~R Krishnaswami.
\newblock Datafun: a functional datalog.
\newblock In {\em Proceedings of the 21st ACM SIGPLAN International Conference on Functional Programming}, pages 214--227, 2016.

\bibitem{tim2023}
Tim Baccaert and Bas Ketsman.
\newblock Distributed consistency beyond queries.
\newblock In {\em Proceedings of the 42nd ACM SIGMOD-SIGACT-SIGAI Symposium on Principles of Database Systems}, pages 47--58, 2023.

\bibitem{brewer2012cap}
Eric Brewer.
\newblock Cap twelve years later: How the" rules" have changed.
\newblock {\em Computer}, 45(2):23--29, 2012.

\bibitem{dbsp}
Mihai Budiu, Tej Chajed, Frank McSherry, Leonid Ryzhyk, and Val Tannen.
\newblock Dbsp: Automatic incremental view maintenance for rich query languages.
\newblock {\em Proceedings of the VLDB Endowment}, 16(7):1601--1614, 2023.

\bibitem{miller2017}
Kevin Clancy and Heather Miller.
\newblock Monotonicity types for distributed dataflow.
\newblock In {\em Proceedings of the Programming Models and Languages for Distributed Computing}, pages 1--10. 2017.

\bibitem{bloomL}
Neil Conway, William~R Marczak, Peter Alvaro, Joseph~M Hellerstein, and David Maier.
\newblock Logic and lattices for distributed programming.
\newblock In {\em Proceedings of the Third ACM Symposium on Cloud Computing}, pages 1--14, 2012.

\bibitem{seeing-is-believing}
Natacha Crooks, Youer Pu, Lorenzo Alvisi, and Allen Clement.
\newblock Seeing is believing: A client-centric specification of database isolation.
\newblock In {\em Proceedings of the ACM Symposium on Principles of Distributed Computing}, pages 73--82, 2017.

\bibitem{undoableCrdts}
Stephen Dolan.
\newblock Brief announcement: The only undoable crdts are counters.
\newblock In {\em Proceedings of the 39th Symposium on Principles of Distributed Computing}, pages 57--58, 2020.

\bibitem{algebraicAutomataBoook}
Abraham Ginzburg.
\newblock {\em Algebraic Theory of Automata}.
\newblock Academic Press, 1968.

\bibitem{buildingOnQuicksand}
Pat Helland and Dave Campbell.
\newblock Building on quicksand.
\newblock 2009.

\bibitem{declarativeImperative}
Joseph~M Hellerstein.
\newblock The declarative imperative: experiences and conjectures in distributed logic.
\newblock {\em ACM SIGMOD Record}, 39(1):5--19, 2010.

\bibitem{calmcacm}
Joseph~M. Hellerstein and Peter Alvaro.
\newblock Keeping {CALM}.
\newblock {\em Communications of the {ACM}}, 63(9):72--81, August 2020.
\newblock \href {https://doi.org/10.1145/3369736} {\path{doi:10.1145/3369736}}.

\bibitem{hydro-initial-steps}
Joseph~M Hellerstein, Shadaj Laddad, Mae Milano, Conor Power, and Mingwei Samuel.
\newblock Initial steps toward a compiler for distributed programs.
\newblock In {\em Proceedings of the 5th workshop on Advanced tools, programming languages, and PLatforms for Implementing and Evaluating algorithms for Distributed systems}, pages 1--10, 2023.

\bibitem{datalog-in-wonderland}
Mahmoud~Abo Khamis, Hung~Q Ngo, Reinhard Pichler, Dan Suciu, and Yisu Remy~Wang.
\newblock Datalog in wonderland.
\newblock {\em ACM SIGMOD Record}, 51(2):6--17, 2022.

\bibitem{rings-ivm-koch}
Christoph Koch.
\newblock Incremental query evaluation in a ring of databases.
\newblock In {\em Proceedings of the twenty-ninth ACM SIGMOD-SIGACT-SIGART symposium on Principles of database systems}, pages 87--98, 2010.

\bibitem{dbtoaster}
Christoph Koch, Yanif Ahmad, Oliver Kennedy, Milos Nikolic, Andres N{\"o}tzli, Daniel Lupei, and Amir Shaikhha.
\newblock Dbtoaster: higher-order delta processing for dynamic, frequently fresh views.
\newblock {\em The VLDB Journal}, 23:253--278, 2014.

\bibitem{lvars}
Lindsey Kuper and Ryan~R Newton.
\newblock Lvars: lattice-based data structures for deterministic parallelism.
\newblock In {\em Proceedings of the 2nd ACM SIGPLAN workshop on Functional high-performance computing}, pages 71--84, 2013.

\bibitem{crdtOn}
Shadaj Laddad, Conor Power, Mae Milano, Alvin Cheung, Natacha Crooks, and Joseph~M Hellerstein.
\newblock Keep calm and crdt on.
\newblock {\em Proceedings of the VLDB Endowment}, 16(4):856--863, 2022.

\bibitem{lasp}
Christopher Meiklejohn and Peter Van~Roy.
\newblock Lasp: A language for distributed, coordination-free programming.
\newblock In {\em Proceedings of the 17th International Symposium on Principles and Practice of Declarative Programming}, pages 184--195, 2015.

\bibitem{gallifrey}
Mae Milano, Rolph Recto, Tom Magrino, and Andrew~C Myers.
\newblock A tour of gallifrey, a language for geodistributed programming.
\newblock In {\em 3rd Summit on Advances in Programming Languages (SNAPL 2019)}. Schloss Dagstuhl-Leibniz-Zentrum fuer Informatik, 2019.

\bibitem{latticeWrappers}
Conor Power, Saikrishna Achalla, Ryan Cottone, Nathaniel Macasaet, and Joseph~M Hellerstein.
\newblock Wrapping rings in lattices: An algebraic symbiosis of incremental view maintenance and eventual consistency.
\newblock In {\em Proceedings of the 11th Workshop on Principles and Practice of Consistency for Distributed Data}, pages 15--22, 2024.

\bibitem{crdtOverview}
Nuno Pregui{\c{c}}a.
\newblock Conflict-free replicated data types: An overview.
\newblock {\em arXiv e-prints}, pages arXiv--1806, 2018.

\bibitem{crdtsSurvey}
Marc Shapiro, Nuno Pregui{\c{c}}a, Carlos Baquero, and Marek Zawirski.
\newblock A comprehensive study of convergent and commutative replicated data types.
\newblock In {\em Technical Report, Inria--Centre Paris-Rocquencourt; INRIA}, 2011.

\bibitem{crdts}
Marc Shapiro, Nuno Pregui{\c{c}}a, Carlos Baquero, and Marek Zawirski.
\newblock Conflict-free replicated data types.
\newblock In {\em Symposium on Self-Stabilizing Systems}, pages 386--400. Springer, 2011.

\bibitem{sipserTextbook}
Michael Sipser.
\newblock Introduction to the theory of computation.
\newblock {\em ACM Sigact News}, 27(1):27--29, 1996.

\bibitem{zinn-win-move}
Daniel Zinn, Todd~J Green, and Bertram Lud{\"a}scher.
\newblock Win-move is coordination-free (sometimes).
\newblock In {\em Proceedings of the 15th International Conference on Database Theory}, pages 99--113, 2012.

\end{thebibliography}

\newpage
\appendix
\section{Commutativity}
\label{sec:commutative}
Another algebraic property that many distributed systems satisfy is that of commutativity. It will be convenient to switch from polish prefix notation to infix notation for our update transitions. We use $\cdot$ for the application of an update transition so $U(s, a)$ becomes $s \cdot a$ and $U(U(s, a), b)$ becomes $s \cdot a \cdot b$. Given an ordered sequence of labels $\mathbf{a} = a_1, a_2, ..., a_k$ we will also use the shorthand $s \cdot \mathbf{a}$ to mean $S \cdot a_1 \cdot a_2 \cdot ... \cdot a_k$.

\begin{definition}[Commutativity]
Let $\sys = (D,L,U)$ be a semiautomaton, and $Q: D \rightarrow R$ be a query. We say that $U$ is {\em commutative} (resp. $Q$ is commutative) if for any state $s \in D$ and any sequences of labels $\mathbf{a},\mathbf{b}$ from $L$, we have $s \cdot \mathbf{a} \cdot \mathbf{b} = s \cdot \mathbf{b} \cdot \mathbf{a}$ (resp. $Q(s \cdot \mathbf{a} \cdot \mathbf{b}) = Q(s \cdot \mathbf{b} \cdot \mathbf{a})$).
\end{definition}

Commutative update implies commutative query, but the converse does not hold. A query may be indifferent to the order of updates but the state itself may be order-sensitive. For instance, the query counting the number of "a"s in a string with string concatenation as the update operation.

Insertion of tuples into a database instance is a commutative update. An update that allows insertion or deletion of tuples over a set-semantics database is not commutative; however insertion or deletion of tuples over a $\mathbb{Z}$-set semantics database is commutative (and forms an abelian group)~\cite{dbsp}.

\begin{proposition} \label{commutative-same-ft-value}
Let $\sys = (D,L,U)$ be a semiautomaton with a bottom state $\bot$, and $Q$ be a commutative query. Then, all free termination states return the same value for $Q$.
\end{proposition}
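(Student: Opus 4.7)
The plan is to leverage the bottom state $\bot$ as a common ancestor of any two candidate free termination states, and then use the commutativity of $Q$ to swap the order in which we layer the update sequences on top of $\bot$.

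First I would fix two arbitrary free termination states $s_1, s_2 \in D$ and use the fact that $\bot$ reaches every state to obtain finite label sequences $\mathbf{a}_1, \mathbf{a}_2 \in L^*$ such that $\bot \cdot \mathbf{a}_1 = s_1$ and $\bot \cdot \mathbf{a}_2 = s_2$. The natural candidate for a ``meeting point'' is the state $\bot \cdot \mathbf{a}_1 \cdot \mathbf{a}_2$, which is reachable from $s_1$ by applying $\mathbf{a}_2$, and the state $\bot \cdot \mathbf{a}_2 \cdot \mathbf{a}_1$, which is reachable from $s_2$ by applying $\mathbf{a}_1$.

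Next I would invoke the definition of free termination twice. Since $s_1$ is a free termination state and $s_1 \twoheadrightarrow s_1 \cdot \mathbf{a}_2$, we get $Q(s_1) = Q(s_1 \cdot \mathbf{a}_2) = Q(\bot \cdot \mathbf{a}_1 \cdot \mathbf{a}_2)$; by the symmetric argument on $s_2$, we get $Q(s_2) = Q(\bot \cdot \mathbf{a}_2 \cdot \mathbf{a}_1)$. Commutativity of $Q$ applied with base state $\bot$ and sequences $\mathbf{a}_1, \mathbf{a}_2$ then yields $Q(\bot \cdot \mathbf{a}_1 \cdot \mathbf{a}_2) = Q(\bot \cdot \mathbf{a}_2 \cdot \mathbf{a}_1)$, chaining the three equalities into $Q(s_1) = Q(s_2)$.

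The argument is essentially a three-line chase, so I do not anticipate a real obstacle; the only subtlety worth pinning down is that commutativity is assumed on $Q$ (not on $U$), so I must be careful to phrase the swap purely as an equality of query values at the two reachable endpoints rather than as an equality of states themselves. This matches exactly the weaker form in the definition, and since both $\bot \cdot \mathbf{a}_1 \cdot \mathbf{a}_2$ and $\bot \cdot \mathbf{a}_2 \cdot \mathbf{a}_1$ arise by appending $\mathbf{a}_1, \mathbf{a}_2$ to the same base state $\bot$ in opposite orders, the commutativity hypothesis applies verbatim.
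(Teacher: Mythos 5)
Your proposal is correct and follows essentially the same argument as the paper's proof: route both free termination states through $\bot$, compare $Q(\bot \cdot \mathbf{a}_1 \cdot \mathbf{a}_2)$ with $Q(\bot \cdot \mathbf{a}_2 \cdot \mathbf{a}_1)$ via commutativity of $Q$, and close the chain using the free termination property at $s_1$ and $s_2$. Your remark that only query-level (not state-level) commutativity is needed is exactly the right subtlety and is consistent with how the paper uses the hypothesis.
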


\begin{proof}
Take any two free termination states $s_1, s_2 \in D$. Let $\mathbf{a}_1$ be a sequence of updates such that $\bot \cdot \mathbf{a}_1 = s_1$ and let $\mathbf{a}_2$ be a sequence of updates such that $\bot \cdot \mathbf{a}_2 = s_2$. Because $Q$ is commutative, we have $Q(\bot \cdot \mathbf{a}_2 \cdot \mathbf{a}_1) = Q(\bot \cdot \mathbf{a}_1 \cdot \mathbf{a}_2) = r$. Since $s_1$ is a free termination state and $s_1 \rightarrow \bot \cdot \mathbf{a}_1 \cdot \mathbf{a}_2$, we have  $Q(s_1) = Q(\bot \cdot \mathbf{a}_1 \cdot \mathbf{a}_2) = r$. Similarly, $s_2$ is a free termination state and $s_2 \rightarrow \bot \cdot \mathbf{a}_2 \cdot \mathbf{a}_1$, so $Q(s_2) =r$. Thus, $Q(s_1) = Q(s_2)$.
\end{proof}

\begin{proposition} \label{commutative-always-reachable}
Let $\sys = (D,L,U)$ be a semiautomaton with a bottom state $\bot$, and $U$ be commutative.  If a query is freely terminating, then any state can reach a free termination state.
\end{proposition}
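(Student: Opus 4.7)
The plan is to mimic the proof of Proposition \ref{commutative-same-ft-value}: push the current state forward along a sequence of updates that lands us on top of the free termination state, then use commutativity to argue that this new state is itself free terminating.

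Concretely, suppose $s_t \in D$ is a free termination state and let $s \in D$ be an arbitrary state. Since $\bot$ reaches every state, fix sequences $\mathbf{a}, \mathbf{b}$ of labels from $L$ with $\bot \cdot \mathbf{a} = s$ and $\bot \cdot \mathbf{b} = s_t$. My candidate target state for $s$ to reach is $s \cdot \mathbf{b}$. By commutativity of $U$,
\[
s \cdot \mathbf{b} \;=\; \bot \cdot \mathbf{a} \cdot \mathbf{b} \;=\; \bot \cdot \mathbf{b} \cdot \mathbf{a} \;=\; s_t \cdot \mathbf{a}.
\]
So it suffices to argue that $s_t \cdot \mathbf{a}$ is a free termination state; once shown, $s \to s \cdot \mathbf{b} = s_t \cdot \mathbf{a}$ witnesses that $s$ reaches a free termination state.

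To show $s_t \cdot \mathbf{a}$ is a free termination state, take any $s' \in U^{\infty}(s_t \cdot \mathbf{a})$. Then $s' \in U^{\infty}(s_t)$ as well (compose the reaching sequence from $s_t$ to $s_t \cdot \mathbf{a}$ with the one to $s'$), and because $s_t$ freely terminates, $Q(s') = Q(s_t)$. In particular $Q(s_t \cdot \mathbf{a}) = Q(s_t)$, so all states reachable from $s_t \cdot \mathbf{a}$ share the same $Q$ value, which is exactly the definition of a free termination state.

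The only subtle point—and the one I would flag as the main obstacle—is that the statement assumes \emph{$U$} is commutative, not merely $Q$; I rely on this to rewrite $\bot \cdot \mathbf{a} \cdot \mathbf{b}$ as $\bot \cdot \mathbf{b} \cdot \mathbf{a}$ at the level of states. If only $Q$ were commutative we would get $Q(s \cdot \mathbf{b}) = Q(s_t \cdot \mathbf{a})$ but not the state equality, which would be insufficient to conclude that $s$ reaches a free termination state rather than merely a state with the right query value. Everything else is a routine unpacking of definitions.
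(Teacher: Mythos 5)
Your proof is correct and follows essentially the same route as the paper: use commutativity of $U$ to identify $s \cdot \mathbf{b} = \bot \cdot \mathbf{a} \cdot \mathbf{b} = \bot \cdot \mathbf{b} \cdot \mathbf{a} = s_t \cdot \mathbf{a}$, and observe that any state reachable from the free termination state $s_t$ is itself a free termination state. The only difference is that you spell out the last step (closure of free termination under reachability) explicitly, which the paper asserts in one line; your remark that commutativity of $U$, not just of $Q$, is essential for the state-level equality is also exactly right.
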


\begin{proof}
Let $s$ be the current state. Since $Q$ is freely terminating, there exists a free termination state $s_t$. Let $\mathbf{a}_t$ be an update sequence such that $\bot \cdot \mathbf{a}_t = s_t$, and $\mathbf{a}$ an update sequence such that $\bot \cdot \mathbf{a} = s$. From the commutativity of $U$, we have that $\bot \cdot \mathbf{a}_t \cdot \mathbf{a} = \bot \cdot \mathbf{a} \cdot \mathbf{a}_t = s'$. Since $s_t \rightarrow s'$, $s'$ must be a free termination state. But then we also have that $s \rightarrow s'$.
\end{proof}

\section{Examples of CRDTs}
\label{sec:crdt-examples}
\textit{State-based CRDTs} are sometimes referred to as the "semilattice data model" \footnote{This semilattice approach was also dubbed "ACID 2.0" by Helland, Campbell, and Finkelstein~\cite{buildingOnQuicksand}}.
A state-based CRDT must define three functions over its state, an \textit{update} operation that allows state to be mutated from outside the system, a \textit{merge} operation that determines how the states of two replicas can combine to converge to the same state, and a \textit{query} operation that defines what is visible to the user from the internal state~\cite{crdtsSurvey}. The basic idea is that updates from users can modify the state of any replica in the distributed system and asynchronous gossip of replica states in the background will ensure each replica eventually converges to the same state. In order to guarantee this convergence (formally known as \textit{strong eventual consistency}), the merge operation must be associative, commutative, and idempotent and the update operation must be inflationary with respect to the partial ordering induced by the merge operation. The intuition for why the merge must satisfy these three algebraic properties is that each one provides robustness to a different source of nondeterminism that can occur over asynchronous networks. Idempotence gives robustness to \textit{duplication} of messages which can occur via the network hardware itself or via retrying of message sends in the application. Commutativity gives robustness to \textit{reordering} of messages which can occur both in point-to-point streams of communication between machines and also in the order in which the same message is received when sent to multiple machines. Lastly, associativity gives robustness to the \textit{batching} of operations which allows states to be merged without respecting a left-fold operation order.

Below, we present some common examples of state-based CRDTs and discuss their algebraic properties and how they affect free termination.

\introparagraph{Grow-Only Set CRDT} This CRDT is defined as follows:
\begin{description}
    \item[state:] Set $S$ over the universe $\mathbb{U}$
    \item[update:] Set $S$ over $\mathbb{U}$ and  $m \in \mathbb{U}$. Then, $\textsf{update}(S, m) = S \cup \{m\}$ 
    \item[merge:] $\textsf{merge}(A, B) = A \cup B$
    \item[query:] $Q(S) = S$ (identity query)
\end{description}

   An append-only database is an example of a grow-only set. All results about free termination, threshold queries, and free termination of monotone and antitone queries apply to grow-only set CRDTs.
   
\introparagraph{Two-Phase Set CRDT}: We define it as follows:

\begin{description}
    \item[state:] Two sets over the universe $\mathbb{U}$, $\mathsf{INSERTS}$ and $\mathsf{DELETES}$.
    \item[update:] The operation "delete($m$)" gives $\mathsf{DELETES} = \mathsf{DELETES} \cup \{ m\}$. The operation "insert($m$)" gives $\mathsf{INSERTS} = \mathsf{INSERTS} \cup \{m\}$.
    \item[merge:] To merge two states, we set-union the $\mathsf{INSERTS}$ and $\mathsf{DELETES}$ sets respectively
    \item[query:] $\mathsf{INSERTS} - \mathsf{DELETES}$ (set difference)
\end{description}

This models something close to a database instance with insertions and deletions, but the semantics of deletion are slightly altered. The set of deletions is grow-only, so when an element is deleted once it is forever in the deletion set. This means that $m \in \mathsf{DELETES} \rightarrow m \notin \mathsf{INSERTS} - \mathsf{DELETES}$. In other words, we can negatively partially freely terminate on any value in $\mathsf{DELETES}$. On the other hand, we can never positively partially free terminate as any element in the $\mathsf\{INSERTS\}$ set could always be added to the $\mathsf{DELETES}$ set by some future update. In the paper Keep CALM and CRDT On ~\cite{crdtOn} this example was pointed out as a CRDT that has eventually consistent state, but a non-monotone user-observable query over that state. The paper stated the intuition that this CRDT offers weaker guarantees than a threshold-query over a CRDT would offer. We now see that intuition of weaker guarantees formalized by our definition of free termination. The grow-only set CRDT offers freely terminating threshold and dual-threshold queries. While the state of the two-phase set grows monotonically over time, its non-monotone query can only offer negative partial free termination.

Observe that the state transition graph of the Two-Phase Set CRDT is inflationary with respect to the set subset partial ordering, but if we look at an equivalent state transition graph with respect to the query, the graph contains cycles. The graph however is not fully invertible, which is what allows for negative partial free termination. 

\introparagraph{Grow-only Counter CRDT} This is defined as follows:

\begin{description}
    \item[state:] A  map $\mathsf{k}$ from unique replica IDs to natural numbers.
    \item[update:] An update at replica with ID $j$ increments the value of $\mathsf{k}[j]$.
    \item[merge:] Element-wise max of the map elements using the natural ordering of the natural numbers.
    \item[query:] Sum over the values in the map.
\end{description}

    The grow-only counter allows for eventually consistent counters for things like tallying votes or counting likes in a distributed setting. It is not sufficient to have the states be just a natural number that gets incremented because there is not an idempotent way to merge these states together, which may result in double-counting increments. To resolve this, each replica is assigned a unique ID and the only that replica will be able to apply updates that increment that unique ID's value. Effectively, the state of every node in the system is tracked in a map at each node and the merge operation overwrites this map with the more up-to-date values for any elements of the map your replica has out-of-date information on. $\mathsf{max}: \mathbb{N} \times \mathbb{N} \rightarrow \mathbb{N}$ is an associative, commutative, and idempotent operation so it forms a valid merge operation for a state-based CRDT. The update operation is monotone w.r.t. the ordering of $\mathsf{max}$, which means that it can act as a proxy for determining which states are \textit{newer} and which are outdated.

 \introparagraph{Positive-Negative Counter CRDT} We define this CRDT as follows:

\begin{description}
    \item[state:] A  map $\mathsf{k}$ from unique replica IDs to pairs of natural numbers.
    \item[update:] An update at replica with ID $j$ increments one of the two natural numbers in the pair at key $j$ in the map. An update meant to decrement the global counter will increment the right number and an update meant to increment the global counter will increment the left number.
    \item[merge:] Element-wise max of the map elements using the natural ordering of the pairs of natural numbers $(\mathbb{N} \times \mathbb{N})$.
    \item[query:] Compute a new map $\mathsf{k'}[j]= \mathsf{k}_1[j] - \mathsf{k}_2[j]$. Then sum over values in this map.
\end{description}

    The positive-negative-counter extends the grow-only counter to support decrementing counters in addition to incrementing them. The challenge beyond the lack of idempotence from the grow-only counter is that the obvious update operation is no longer monotone. To circumvent this, states are represented as monotonically growing pairs of natural numbers, with one representing the number of increments at that replica and the other representing the number of decrements at that replica.
    Observe that the range of the query is $\mathbb{Z}$ and all integer query results are reachable from all states in the semiautomaton. This means there is a fully invertible equivalent semiautomaton to the default CRDT representation, but the additional metadata kept around in the local states of the CRDT make the original semiautomaton fully acyclic and inflationary.

\section{Transducer Models with Metadata}
\label{sec:appendix-metadata}

We will show in this section how we can view a policy-aware transducer~\cite{weakerMonotonicity} via the lens of free termination. In the policy-aware setting, we are equipped with a {\em distribution policy} $P$ that maps each fact $t$ to a subset of the network nodes (which are the nodes that hold $t$). Each node can apply locally $P$ to any (possible) fact that uses constants from the current local active domain of the node.  

To do this, we will define a new semiautomaton $\sys_\cup^{\pm}$. Each state of the semiautomaton consists of a tuple of instances $(I^+, I^-)$ with the property that $I^+ \cap I^- = \emptyset$, i.e., the two instances have no tuples in common. Intuitively, $I^+$ tracks the presence of tuples, and $I^-$ tracks the absence of tuples. A transition in $\sys_\cup^{\pm}$ simply adds a tuple in either $I^+$ or $I^-$. The query result is computed by running the query on the positive instance, $q(I^+)$.

The distributed computational model is similar to the one presented in Section~\ref{sec:distributed-systems} with the following modifications. First, in a configuration $\gamma$ the buffer $buf$ is a tuple $(buf^+, buf^-)$ which tracks tuples that are to be ``added'' and tuples that are to be ``removed''. Second, the transitions are modified as follows:

\begin{description}
    \item[Produce Transition:] A node can move any tuple in its local state to the buffer $buf^+$ of its neighboring node; it can also check whether for a potential tuple $t$ (from the current local active domain) $P(t)$ contains that node and if not it can add $t$ to the buffer $buf^-$ of its neighboring node.
    \item[Consume Transition:] A node can update its state by removing a tuple from $buf^+$ and adding it to $I^+$, or removing a tuple from $buf^-$ and adding it to $I^-$.
    \item[Ready Transition:] A node can set $ready(v) \gets \mathsf{T}$.
\end{description}

A run is {\em fair} if every fact in every buffer is eventually taken out, and it is {\em complete} if every tuple in a state is eventually sent to its neighbors (and also every negative fact in a local policy is also sent to its neighbors). Correctness is defined in the same way as before.

For an instance $I$, we define $\bar{I}$ to be the set of tuples with values in $\textsf{adom}(I)$ that are not in $I$.

\begin{theorem} \label{thm:coord:free2}
$(Q,I)$ is coordination-free correct if and only if $(I,\bar{I})$ is a free termination state for $Q$ in the semiautomaton $\sys_\cup^{\pm}$.
\end{theorem}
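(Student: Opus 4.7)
The proof parallels that of Theorem~\ref{thm:coord:free}, with bookkeeping for both positive and negative facts.

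For the forward direction, assume $(I,\bar{I})$ is a free termination state in $\sys_\cup^{\pm}$. I would use the algorithm that sets $ready(v)\gets \mathsf{T}$ the first time the local state $(state(v)^+, state(v)^-)$ is a free termination state for $Q$. In any fair and complete run with input $I$, the positive facts of $I$ are gossiped to $v$ and the negative facts of $\bar{I}$ are emitted by nodes whose policy flags their absence and forwarded to $v$, so the local state eventually reaches $(I,\bar{I})$; by assumption this is free terminating and the ready transition fires. Correctness holds because whenever $ready(v)$ is set at some state $(J^+,J^-)$ with $J^+\subseteq I$ and $J^-\subseteq \bar{I}$, the state $(I,\bar{I})$ is reachable from $(J^+,J^-)$ in $\sys_\cup^{\pm}$ (by adding the missing positive tuples in $I\setminus J^+$ and the missing negative tuples in $\bar{I}\setminus J^-$, which remain disjoint), and so $Q(J^+)=Q(I)$ by the free termination property.

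For the reverse direction, suppose $(Q,I)$ is coord-free correct via some algorithm $\mathcal{A}$, yet $(I,\bar{I})$ is not a free termination state. Then there exists a reachable state $(I'^+,I'^-)$ in $\sys_\cup^{\pm}$ with $I\subseteq I'^+$, $\bar{I}\subseteq I'^-$, and $Q(I'^+)\neq Q(I)$. I would construct an alternative global scenario whose instance is $I'^+$ with a distribution policy $P'$ chosen so that each tuple of $I'^-$ can be legitimately emitted as a negative fact by some node (this is consistent since $I'^+\cap I'^-=\emptyset$, so no $t\in I'^-$ is actually present in $I'^+$). I would then exhibit a fair and complete run of $\mathcal{A}$ on this scenario in which the scheduler first delivers to a chosen node $v$ exactly the positive tuples of $I$ and the negative tuples of $\bar{I}$. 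At this moment, the local state of $v$ equals $(I,\bar{I})$, which is indistinguishable from the corresponding state reached at the quiescence point of some fair and complete run on input $I$; by coord-free correctness on $I$, $\mathcal{A}$ must commit to the answer $Q(I)$ at or before this moment. But the correct answer for the actual input $I'^+$ is $Q(I'^+)\neq Q(I)$, a contradiction.

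The main obstacle is the realizability step of the reverse direction: establishing that every state reachable in $\sys_\cup^{\pm}$ from $(I,\bar{I})$ can be materialized at a node via a concrete fair and complete run for some valid input--policy pair. This hinges on choosing $P'$ so that the extra positive tuples in $I'^+\setminus I$ sit at deliverable partitions and the extra negative tuples in $I'^-\setminus\bar{I}$ are covered by the appropriate policy exclusion conditions at some node. Once this bridge between semiautomaton reachability and distributed-run realizability is in hand, the indistinguishability argument closes the theorem along the same lines as Theorem~\ref{thm:coord:free}.
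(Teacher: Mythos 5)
Your proof follows the same overall architecture as the paper's: the forward direction uses the identical algorithm (fire the ready transition the first time the local state is a free termination state of $\sys_\cup^{\pm}$) with the same correctness argument via reachability of $(I,\bar{I})$ from any intermediate $(J^+,J^-)$, and the reverse direction is the same indistinguishability argument at the quiescence point. The one place you diverge is the step you yourself flag as the ``main obstacle'': you quantify over \emph{arbitrary} reachable states $(I'^+,I'^-)$ and then must argue that the extra negative facts in $I'^-\setminus\bar{I}$ are realizable under some policy $P'$, which is both unresolved in your write-up and arguably not available to you (the distribution policy is part of the problem instance, not a free parameter). The paper sidesteps this entirely: since the query result of a state is $q(I^+)$, only the positive component matters, and every reachable positive component $I'\supseteq I$ with $I'\cap\bar{I}=\emptyset$ is witnessed by the canonical state $(I',\bar{I'})$, which \emph{is} realizable as a genuine input $I'$ under the same policy with its own active-domain complement. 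Substituting that observation for your policy-construction step closes your argument without any new machinery and brings it into exact alignment with the paper's proof; as written, the realizability bridge remains a gap, albeit one with a short repair.
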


\begin{proof} 
Consider the following algorithm: it will set $ready(v)$ to true exactly when $(I^+,I^-)$ is a free termination state for $Q$ in $\sys_\cup^{\pm}$. This algorithm is correct, since at the instance $(J^+,J^-)$ where $ready$ becomes true, all reachable states maintain the result of $Q$. Note that from the network flooding construction, $J^+ \subseteq I$ and $J^- \subseteq \bar{I}$. Hence, $(I,\bar{I})$ is reachable from $(J^+,J^-)$ and $Q(J^+) = Q(I)$. Further, every fair and complete run reaches the quiescence point, when $state(v) = (I, \bar{I})$, and thus the algorithm will set $ready$ to true.

In the other direction, suppose $(Q,I)$ is coordination-free correct and consider an algorithm that computes the output correctly. At the quiescence point, when $state(v) = (I, \bar{I})$, this algorithm must set $ready$ to true (since the state will remain unchanged from that point on). But now consider a fair and complete run for another input $I' \supseteq I$ such that $I'$ does not contain any tuples from $\bar{I}$, and assume that some node $v$ receives first all of $I, \bar{I}$. Note that $(I',\bar{I}')$ is a reachable state in $\sys_\cup^{\pm}$. At this point, the algorithm would need to do the ready transition. But because of correctness, it must be that $Q(I) = Q(I')$. Hence, $I$ is a free termination state.
\end{proof}

We say that a query $Q$ is {\em domain-distinct-monotone} if $Q(I) \subseteq Q(I \cup J)$ for all instances $I,J$ for which $J$ is domain distinct from $I$ (meaning every fact in $J$ has some constant that does not appear in $I$).
We can now show the following analogous theorem, which captures the characterization of policy-aware transducers in~\cite{weakerMonotonicity}.

\begin{theorem}
A Boolean query $Q$ is positively (resp. negatively) coordination-free in the policy-aware setting if and only if $Q$ is domain-distinct-monotone (resp. domain-distinct-monotone).
\end{theorem}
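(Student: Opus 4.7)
The plan is to leverage Theorem~\ref{thm:coord:free2}, which reduces coordination-free correctness of $(Q,I)$ to checking that $(I,\bar I)$ is a free termination state in $\sys_\cup^{\pm}$. So positive coordination-freeness of $Q$ is equivalent to the statement that $(I,\bar I)$ is a free termination state for every $I$ with $Q(I)=\mathsf{T}$. The task then becomes translating the free termination condition into the domain-distinct-monotonicity condition.

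The crucial step is a clean characterization of the states reachable from $(I,\bar I)$ in $\sys_\cup^{\pm}$. Since transitions only add tuples to $I^+$ or $I^-$, every reachable state has the form $(I \cup K,\ \bar I \cup M)$ with $K,M$ disjoint and each disjoint from $I \cup \bar I$. Because $\bar I$ already contains every fact over $\textsf{adom}(I)$ that is not in $I$, no fact in $K$ can use only constants from $\textsf{adom}(I)$ (otherwise it would either be in $I$ and collide with itself, or be in $\bar I$ and violate $I^+\cap I^-=\emptyset$). Hence every fact in $K$ must mention at least one constant outside $\textsf{adom}(I)$, i.e.\ $K$ is domain-distinct from $I$. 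Conversely, for any $K$ domain-distinct from $I$, one can reach a state with positive component $I\cup K$ simply by adding the tuples of $K$ to $I^+$ one at a time. This establishes the exact correspondence between reachable positive components and domain-distinct extensions of $I$.

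With this in hand, the forward direction is immediate: if $Q$ is positively coordination-free and $J$ is domain-distinct from $I$ with $Q(I)=\mathsf{T}$, then $(I\cup J,\bar I)$ is reachable from the free termination state $(I,\bar I)$, so $Q(I\cup J)=Q(I)=\mathsf{T}$, which is domain-distinct-monotonicity. For the backward direction, assume $Q$ is domain-distinct-monotone and fix $I$ with $Q(I)=\mathsf{T}$. For any reachable $(J^+,J^-)$ from $(I,\bar I)$, the characterization gives $J^+ = I \cup K$ with $K$ domain-distinct from $I$, so domain-distinct-monotonicity yields $Q(J^+)=\mathsf{T}=Q(I)$. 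Hence $(I,\bar I)$ is a free termination state, and Theorem~\ref{thm:coord:free2} gives positive coordination-freeness. The negative (antitone) case is fully symmetric, working with instances where $Q(I)=\mathsf{F}$.

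The main obstacle is the reachability characterization, and specifically the non-obvious direction that \emph{every} reachable positive extension is forced to be domain-distinct from $I$. This is where the seemingly minor disjointness invariant $I^+\cap I^-=\emptyset$ does real work: it interacts with the initial state $(I,\bar I)$, which already saturates the local active domain on the negative side, to prevent any in-domain tuple from ever entering $I^+$. Once this structural lemma is in place, both directions of the theorem follow by a direct application of Theorem~\ref{thm:coord:free2}, mirroring the proof of the non-policy-aware version.
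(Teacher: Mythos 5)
Your proof is correct and takes essentially the same route as the paper's: reduce coordination-freeness to free termination of $(I,\bar{I})$ via Theorem~\ref{thm:coord:free2}, then exploit the fact that the saturated negative component $\bar{I}$ forces every reachable positive extension to be domain-distinct from $I$. It is in fact somewhat more complete than the paper's version, which only spells out the forward direction (coordination-free implies domain-distinct-monotone); your explicit reachability characterization supplies the converse direction that the paper leaves implicit.
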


\begin{proof}
From Theorem~\ref{thm:coord:free2}, $Q$ is positively coordination-free if and only if every state $(I,\bar{I})$ with $Q(I) = \mathsf{T}$ is a free termination state. Take $J$ domain distinct from $I$ and suppose $Q(I)$ is true.
Then, $I$ is a free termination state. Since $(I \cup J, \bar{I \cup J})$ is reachable from $(I,\bar{I})$, $Q(I \cup J) = Q(I) = \mathsf{T}$. Hence, $Q$ is domain-distinct-monotone. The antitone case is symmetric.
\end{proof}

\end{document}
\include